\newtheorem{thm}{Theorem}[section]
\newtheorem{definition}{Definition}
\newtheorem{remark}{Remark}
\begin{document}
\title{Relay-assisted Multiple Access with Full-duplex Multi-Packet Reception}

\author{\IEEEauthorblockN{Nikolaos Pappas~\IEEEmembership{Member,~IEEE}, Marios Kountouris~\IEEEmembership{Senior~Member,~IEEE}, Anthony Ephremides~\IEEEmembership{Life~Fellow,~IEEE}, Apostolos Traganitis\thanks{Manuscript received September 19, 2014; revised January 30, 2015 and accepted February 19, 2015.}
\thanks{N. Pappas is with the Department of Science and Technology, Link\"{o}ping University, Norrk\"{o}ping SE-60174, Sweden (e-mail: nikolaos.pappas@liu.se).
M. Kountouris is with the Mathematical and Algorithmic Sciences Lab, France Research Center, Huawei Technologies Co. Ltd. (e-mail: marios.kountouris@huawei.com).
A. Ephremides is with the Department of Electrical and Computer Engineering and Institute for Systems Research University of Maryland, College Park, MD 20742 (e-mail: etony@umd.edu).
A. Traganitis is with the Computer Science Department, University of Crete, Greece and Institute of Computer Science, Foundation for Research and Technology - Hellas (FORTH) (e-mail: tragani@ics.forth.gr).
}
\thanks{This work has been partially supported by the People Programme (Marie Curie Actions) of the European Union's Seventh Framework Programme FP7/2007-2013/ under REA grant agreement no.[612361] -- SOrBet and by the NSF grant CCF1420651, ONR grant N000141410107.}
\thanks{This work was presented in part in IEEE Information Theory Workshop 2011\cite{b:Pappas-ITW2}.}}
}

\markboth{IEEE Transactions on Wireless Communications,~Vol.~XX, No.~X, MONTH~YEAR}{Pappas \MakeLowercase{\textit{et al.}}: Relay-assisted Multiple Access with Full-duplex Multi-Packet Reception}

\maketitle

\begin{abstract}
The effect of full-duplex cooperative relaying in a random access multiuser network is investigated here. First, we model the self-interference incurred due to full-duplex operation, assuming multi-packet reception capabilities for both the relay and the destination node. Traffic at the source nodes is considered saturated and the cooperative relay, which does not have packets of its own, stores a source packet that it receives successfully in its queue when the transmission to the destination has failed. We obtain analytical expressions for key performance metrics at the relay, such as arrival and service rates, stability conditions, and average queue length, as functions of the transmission probabilities, the self interference coefficient, and the links' outage probabilities. Furthermore, we study the impact of the relay node and the self-interference coefficient on the per-user and aggregate throughput, and the average delay per packet. We show that perfect self-interference cancelation plays a crucial role when the SINR threshold is small, since it may result to worse performance in throughput and delay comparing with the half-duplex case. This is because perfect self-interference cancelation can cause an unstable queue at the relay under some conditions.
\end{abstract}

\begin{keywords}
Full-duplex, relay, cooperative communications, network-level cooperation, multiple access, stability, random access networks.
\end{keywords}

\section{Introduction}
\label{sec:intro}
Driven by the exponential traffic growth and the ever-increasing demands for wider spectrum, the quest for higher spectral efficiency and enhanced reliability and coverage is creating a new impetus for cooperative communication systems. Cooperative communication aims at increasing the link data rates and the reliability of time-varying links, by overcoming fading and interference in wireless networks. Among the various cooperation techniques to increase throughput, full-duplex relaying has recently gained significant attention. The vast majority of research papers have considered half-duplex or out-of-band full-duplex systems, in which terminals cannot transmit and receive at the same time, or over the same frequency band. However, the use of nodes with in-band full-duplex capability, i.e. terminals that transmit and receive simultaneously over the same frequency band, is constantly increasing in current wireless networks as they can potentially double the network spectral efficiency. Moreover, full-duplex relay systems open a whole new spectrum of capabilities, such as collision detection in contention-based networks. In this work, we focus on a relay-assisted random access network and we analyze the effect of full-duplex cooperative relaying in the network performance, namely arrival and service rates, stability conditions, and average queue length at the relay.

\subsection{Related Work}

The classical relay channel was originally introduced by van der Meulen~\cite{b:Muelen}, and earlier work on the relay channel was based on information-theoretic formulations, e.g. ~\cite{b:CoverGamal}. Most cooperative techniques that have been studied so far focus on the benefits of physical layer cooperation~\cite{b:Yates-NOW}. Nevertheless, there is evidence that the same gains can be achieved with network layer cooperation, which is plain relaying without any physical layer considerations~\cite{b:Sadek, b:Rong1}. Recently several works have investigated relaying performance at the MAC layer~\cite{b:Sadek, b:Rong1, b:Rong2, b:Simeone, b:Pappas, b:PappasITW2013Relay, b:PappasGlobalsip2013, b:PappasTIT2015}. More specifically, in~\cite{b:Sadek}, the authors have studied the impact of cooperative communication at the medium access control layer with TDMA. They introduced a new cognitive multiple access protocol in the presence of a relay in the network. In~\cite{b:Pappas-ISIT} the notion of partial network level cooperation is introduced by adding a flow controller at the relay, which regulates the amount of provided cooperation depending on the conditions of the network. The classical analysis of random multiple access schemes, like slotted ALOHA~\cite{b:Bertsekas}, has focused on the so-called collision model. Random access with multi-packet reception (MPR) has attracted attention recently~\cite{b:AlohaVerdu, b:Angel, b:Naware, b:PappasMPR}. All the above approaches come together in the model that we consider.

In wireless networks, when a wireless node transmits and receives simultaneously in the same frequency, the problem of self-interference arises. Self-interference mitigation is a key challenge in in-band full-duplex systems. Information-theoretic aspects of this problem can be found in the pioneering work of Shannon~\cite{b:Shannon2way}, although the capacity region of the two-way channel is not known for the general case~\cite{b:Cover}. The information-theoretic limits of in-band full-duplex relaying have been studied focusing on the idealistic case of perfect self-interference cancelation \cite{Somekh07, Jafar09}. There exist several techniques that allow the possibility of perfect self-interference cancelation~\cite{b:Cover}. However, in practice, there are several technological limitations and challenges~\cite{b:selflimit1, b:selflimit2}, which may limit the accuracy and the effectiveness of self-interference cancelation. Various methods for performing self-interference cancelation at the receivers can be found in~\cite{b:selfcancel1} and~\cite{b:selfcancel2}. The main result therein is that there is a tradeoff between transceiver complexity and self-interference cancelation accuracy. In ~\cite{b:Tsubouchi93, b:Chen98}, it was demonstrated in practice real implementations of simultaneous transceivers,  where the self-interference problem has been mitigated through RF isolators and echo cancellers, coupled with base-band digital filtering. Furthermore, some recent results have also shown that full duplex is possible, proposing specific designs, e.g.~\cite{b:Choi-Mobicom2010, b:Jain-Mobicom2011}, which mainly focus on the physical and the medium access control (MAC) layer design. Choi et al. in~\cite{b:Choi-Mobicom2010} designed a practical single-channel full-duplex wireless system, combining three self-interference cancellation schemes, as well as RF and digital interference cancellation. Jain et al.~\cite{b:Jain-Mobicom2011} presented a full-duplex radio design using signal inversion and adaptive cancellation. Unlike~\cite{b:Choi-Mobicom2010}, the authors in \cite{b:Jain-Mobicom2011} consider wideband and high power systems. In theory, this new design has no limitation in terms of bandwidth or power. Therefore, building full-duplex wireless networks (such as full-duplex 802.11n wireless networks) has started becoming feasible. Fang et al.~\cite{b:Pathbook} proposed a collision-free full-duplex broadcast MAC and studied cross-layer optimization of MAC and routing in full-duplex wireless networks under various resource and social constraints. In~\cite{b:KwonTVT12} the comparison of performance of half and full-duplex relay is studied at the physical layer, in~\cite{b:KimTVT13} is investigated the effect of channel estimation errors on the ergodic capacities for bidirectional full-duplex transmission. An information theoretic study in~\cite{b:ErkipCISS13} compares multi-antenna half and full-duplex relaying from the perspective of achievable rates.

\subsection{Contribution}

In this work, we complement and extend the work in \cite{b:Pappas-ITW2}. We study the operation of a cooperative node relaying packets from a number of users/sources to a destination node as shown in Fig.~\ref{fig:netmodel}. We assume MPR capability for both the relay and the destination node. The relay node can transmit and receive at the same time over the same frequency band (in-band full duplex). We assume random medium access, slotted time, and that each packet transmission takes one timeslot. The wireless channel is modeled as Rayleigh flat-fading channel with additive white Gaussian noise. A user transmission is successful if the received signal-to-interference-plus-noise ratio (SINR) is above a certain threshold $\gamma$. We also assume that acknowledgements (ACKs) are instantaneous and error free. The relay does not have packets by itself and the source nodes are considered saturated with unlimited amount of traffic. The self-interference cancellation at the relay is modeled as a variable power gain, mainly because we are studying the impact on the network layer\footnote{The self-interference cancellation at the relay is modeled as a variable power gain that affects the success probability with which the relay will receive a packet and is described in Section~\ref{sec:sysmod}.}. Studying in detail the physical layer implementation of self-interference mitigation and considering specific self-interference cancelation mechanisms is beyond the scope of this paper.
We obtain analytical expressions for key performance characteristics of the relay queue, such as arrival and service rates, and we derive conditions for stability and the average queue length as functions of the transmission probabilities, the self-interference coefficient, and the links' outage probabilities. In particular, we study the impact of the relay node and the self-interference coefficient on the per-user and the network-wide throughput, as well as the average delay per packet. Furthermore, we derive expressions for both the per-user and aggregate throughput when the queue at the relay is unstable, for which case we do not have though any guarantees for bounded delay.

The remainder of the paper is organized as follows: Section~\ref{sec:sysmod} describes the system model and in Section~\ref{sec:analysis} we present the main characteristics of the relay queue, such as the average arrival and service rates. In Section~\ref{sec:thr_analysis}, we provide expressions for the per-user and the aggregate throughput. The average delay per packet is obtained in Section~\ref{sec:delay_analysis}. Numerical results are presented in Section~\ref{sec:results}, and finally Section~\ref{sec:conclusions} concludes the paper.

\section{System Model}
\label{sec:sysmod}
\subsection{Network Model}
We consider a network with $n$ sources, one relay node, and a single destination node. The sources transmit packets to the destination using a cooperative relay; the case of $n=2$ is depicted in Fig.~\ref{fig:netmodel}. We assume that the queues of both sources are saturated, i.e., no external arrivals and unlimited buffer size, and that the relay does not have packets of its own but only forwards the packets it has received from the two users. The relay node stores a source packet that it receives successfully in its queue when the direct transmission to the destination node has failed. We assume a random access channel where $q_{0}$ is the transmit probability of the relay given that it has packets in its queue, and $q_{i}$ for $i \neq 0$ is the transmit probability for the $i$-th user. The receivers at the relay and the destination nodes are equipped with multiuser detectors, hence they can decode packets from more than one transmitter at a time. Furthermore, the relay can simultaneously transmit and receive packets (full duplex).

\begin{figure}[]
\centering
\includegraphics[scale=0.6]{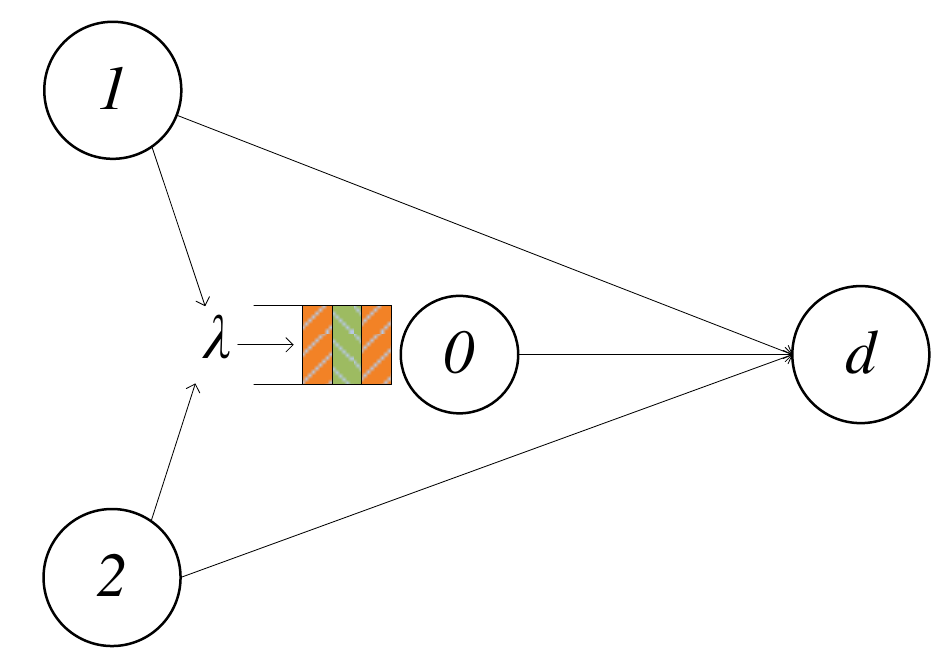}
\caption{The network model for the two-user case: users have saturated queues and the relay only forwards the packets received from both users, which failed to reach the destination.}
\centering
\label{fig:netmodel}
\end{figure}

\subsection{Physical Layer Model}
The MPR channel model used in this paper is a generalized form of the packet erasure model. We assume that a packet transmitted by node $i$ is successfully received by node $j$ if and only if ${\rm SINR}(i,j)\geq \gamma_{j}$, where $\gamma_{j}$ is a threshold characteristic of node $j$. The wireless channel is subject to fading; let $P_{tx}(i)$ be the transmit power at node $i$ and $r(i,j)$ be the distance between $i$ and $j$. The received power at $j$ when $i$ transmits is $P_{rx}(i,j)=A(i,j)h(i,j)$ where $A(i,j)$ is a random variable representing small-scale fading. Under Rayleigh fading, $A(i,j)$ is exponentially distributed~\cite{b:Tse}. The received power factor $h(i,j)$ is given by $h(i,j) = P_{tx}(i)(r(i,j))^{-\alpha}$ where $\alpha$ is the path loss exponent with typical values between $2$ and $6$. We model the self-interference by a scalar $g \in [0,1]$ as in~\cite{b:CodreanuWiOpt10} and~\cite{b:CodreanuITW10}. We refer to the $g$ as the self-interference coefficient. When $g = 1$, no self-interference cancelation technique is used, while $g = 0$ models perfect self-interference cancelation.
The success probability in the link $ij$ is given by
\begin{equation}
\begin{aligned}
\label{eq:succprob}
P_{i/\mathcal{T}}^{j}=\exp\left(-\frac{\gamma_{j}\eta_{j}}{v(i,j)h(i,j)}\right) \left(1+\gamma_{j}(r(i,j))^{\alpha}g \right)^{-m} \times \\ \times \prod_{k\in \mathcal{T}\backslash \left\{i,j\right\}}{\left(1+\gamma_{j}\frac{v(k,j)h(k,j)}{v(i,j)h(i,j)}\right)}^{-1},
\end{aligned}
\end{equation}
where $\mathcal{T}$ is the set of transmitting nodes at the same time, $v(i,j)$ is the parameter of the Rayleigh fading random variable, $\eta_{j}$ is the receiver noise power at $j$ and $m=1$ when $j \in \mathcal{T}$ and $m=0$ else. The analytical derivation for this success probability can be found in~\cite{b:Tse}.

\emph{Note:} The self-interference is modeled through $g$ and it affects the success probability when the relay transmits and receives simultaneously.
The value of $g$ captures the accuracy of the self-interference cancelation. As $g$ approaches $0$ it is closer to the pure full duplex operation. 
When $g$ is $1$ the operation is the half duplex operation since the success probabilities for the users in this case are very close to $0$.

\subsection{Queue Stability}
We adopt the definition of queue stability used in~\cite{Szpankowski:stability}.

\begin{definition}
Denote by $Q_i^t$ the length of queue $i$ at the beginning of timeslot $t$. The queue is said to be \emph{stable} if
\begin{equation}\label{eqn:definition_stability}
    \lim_{t \rightarrow \infty} {Pr}[Q_i^t < {x}] = F(x)  \text{ and } \lim_{ {x} \rightarrow \infty} F(x) = 1.
\end{equation}

If $\lim_{x \rightarrow \infty}  \lim_{t \rightarrow \infty} \inf {Pr}[Q_i^t < {x}] = 1$, the queue is \emph{substable}. If a queue is stable, then it is also substable. If a queue is not substable, then we say it is unstable.
\end{definition}

Loynes' theorem~\cite{b:Loynes} states that if the arrival and service processes of a queue are strictly jointly stationary and the average arrival rate is less than the average service rate, then the queue is stable.

\section{Performance Analysis for the Relay Queue}
\label{sec:analysis}
In this section, we derive expressions for key performance metrics for the relay queue, namely arrival and service rates, stability conditions, and average queue length. The analysis is provided for two cases: (i) when the network consists of two non-symmetric in general users, (ii) for $n>2$ symmetric users.

This section is an intermediate step before investigating the impact of the relay node in the per-user throughput, the aggregate throughput, and the average per packet delay.
In order to study those quantities, we need to first compute the average arrival and service rate of the relay, the average queue length, and the stability conditions.
The stability of a queue is translated to bounded queue size, which implies finite queuing delay.

\subsection{Two-user Case}
We study first the relay queue characteristics for the two-user case.
In this network, at each timeslot, the relay can receive at most two packets (one per user) and to transmit at most one.

The probability that the relay receives $i$ packets in a given timeslot when its queue is empty is denoted by $r_{i}^{0}$, and $r_{i}^{1}$ otherwise (not empty).
The expressions for the $r_{i}^{j}$ are rather lengthy and are presented in Appendix~\ref{sec:app_proof_2}.
The average arrival rate at the relay when its queue is empty is denoted by $\lambda_{0}$, and by $\lambda_{1}$ when it is not (derived in Appendix~\ref{sec:app_proof_2}).
The probability that the relay queue increases by $i$ packets when is empty is denoted by $p_{i}^{0}$, and $p_{i}^{1}$ when it is not; $p_{-1}^{i}$ is the probability
that the queue decreases by one packet. Note that $p_{i}^{j}$ and $r_{i}^{j}$ are in general different quantities, however $p_{i}^{j}=r_{i}^{j}$ in half-duplex relay systems\footnote{The case of half-duplex relay is studied in \cite{b:PappasMPR}, for which the analysis is simpler compared to the full-duplex case.}.

The next theorem presents the main relay queue characteristics for the two-user case.

\begin{thm} \label{thm:2users}
The key performance measures for the relay queue in a two-user network are provided below.
\begin{itemize}
\item[(i)]The average service rate is
\begin{equation} \label{eq:m2}
\begin{aligned}
\mu=q_{0}(1-q_{1})(1-q_{2})P_{0/0}^{d}+q_{0}q_{1}(1-q_{2})P_{0/0,1}^{d}+\\+q_{0}q_{2}(1-q_{1})P_{0/0,2}^{d}+q_{0}q_{1}q_{2}P_{0/0,1,2}^{d}.
\end{aligned}
\end{equation}
where $P_{0/0,i,j}^{d}$ is the success probability between the relay and the destination when the transmitting nodes are the relay and nodes $i$ and $j$. $P_{0/0,i,j}^{d}$ can be computed from (\ref{eq:succprob}).
\item[(ii)] The probability that the queue at the relay is empty is
\begin{equation}
\label{eq:probempty2}
P\left(Q=0\right)=\frac{p_{-1}^{1}-p_{1}^{1}-2p_{2}^{1}}{p_{-1}^{1}-p_{1}^{1}-2p_{2}^{1}+\lambda_{0}}.
\end{equation}
\item[(iii)]The average arrival rate $\lambda$ is
\begin{equation} \label{eq:lambda2}
\lambda=\frac{p_{-1}^{1}-p_{1}^{1}-2p_{2}^{1}}{p_{-1}^{1}-p_{1}^{1}-2p_{2}^{1}+\lambda_{0}}\lambda_{0}+\frac{\lambda_{0}}{p_{-1}^{1}-p_{1}^{1}-2p_{2}^{1}+\lambda_{0}}\lambda_{1}.
\end{equation}
\item[(iv)]The average relay queue size $Q$ is

{\scriptsize
\begin{equation}
\label{eq:avQ2}
\overline{Q}=\frac{(p_{1}^{1}+2p_{2}^{1}-p_{-1}^{1})(4p_{1}^{0}+10p_{2}^{0})+\lambda_{0}(2p_{-1}^{1}-4p_{1}^{1}-10p_{2}^{1})}{2(p_{1}^{1}+2p_{2}^{1}-p_{-1}^{1})(p_{-1}^{1}-p_{1}^{1}-2p_{2}^{1}+\lambda_{0})}.
\end{equation}
}

\end{itemize}
\end{thm}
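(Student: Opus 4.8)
The plan is to model the relay's queue as a discrete-time Markov chain on the nonnegative integers and extract all four quantities from its stationary behavior. The key observation is that because the relay can receive at most two packets and transmit at most one per slot, the queue-length process is a random walk whose one-step increments lie in $\{-1,0,1,2\}$. The transition probabilities depend on whether the queue is empty (governed by $r_i^0$, $p_i^0$, $\lambda_0$) or nonempty (governed by $r_i^1$, $p_i^1$, $p_{-1}^1$, $\lambda_1$), since a nonempty relay transmits with probability $q_0$ and thereby both incurs self-interference on its own reception and opens the possibility of a departure. This two-regime structure is exactly what forces $p_i^j \neq r_i^j$ in the full-duplex case, and it is why the boundary state $Q=0$ must be treated separately throughout.

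\textbf{Service rate (i).} First I would compute $\mu$ directly as the long-run departure probability per slot conditioned on the queue being busy. A departure occurs precisely when the relay transmits (probability $q_0$) and its packet is successfully received at the destination. Conditioning on which subset of the two users is simultaneously active — none, user $1$, user $2$, or both — and weighting by the corresponding access probabilities $q_1, q_2$ partitions the event into the four mutually exclusive cases in \eqref{eq:m2}, with each success probability $P_{0/0,i,j}^d$ read off from \eqref{eq:succprob} using the appropriate transmitting set $\mathcal{T}$. This step is essentially a direct enumeration and should be routine.

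\textbf{Empty-queue probability and arrival rate (ii)--(iii).} Next I would write the stationary (global) balance equations for the Markov chain and solve for $P(Q=0)$. The cleanest route is to balance the probability flow across the cut between state $0$ and the positive states: the upward flux out of the boundary is $P(Q=0)\,\lambda_0$, while the net downward drift from the interior is governed by the combination $p_{-1}^1 - p_1^1 - 2p_2^1$, which is the expected one-step decrease of the queue when busy. Equating these and normalizing yields \eqref{eq:probempty2}. For the average arrival rate \eqref{eq:lambda2} I would then use the total-probability decomposition $\lambda = P(Q=0)\lambda_0 + P(Q>0)\lambda_1$, substituting $P(Q>0)=1-P(Q=0)$ from part (ii); the two prefactors in \eqref{eq:lambda2} are exactly $P(Q=0)$ and $P(Q>0)$.

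\textbf{Average queue length (iv).} Finally, for $\overline{Q}$ I would appeal to the stationary distribution of the walk. The standard tool here is the generating-function (or equivalently the mean-drift / Little-type) technique for an $M/G/1$-like chain with bounded upward jumps: form the probability generating function $\Pi(z)=\sum_k P(Q=k)z^k$, impose the balance equations separately at the boundary and in the interior, and differentiate at $z=1$ to obtain $\overline{Q}=\Pi'(1)$. The numerator constants $4,10$ in \eqref{eq:avQ2} will emerge from the second factorial moments of the jump distribution (the $2$ and $2^2$ contributions of the size-$2$ arrivals), so the bookkeeping of these moments is where errors are most likely. \textbf{The hard part} I expect is precisely this last step: because the boundary state has a different jump law than the interior, the generating function does not factor as cleanly as in a homogeneous random walk, and one must carefully carry the $p_i^0$ terms through the differentiation at $z=1$ — including verifying, via Loynes' theorem and the stability condition $\lambda<\mu$ (equivalently $p_{-1}^1 - p_1^1 - 2p_2^1 > 0$), that $\Pi(z)$ is analytic up to $z=1$ so that $\Pi'(1)$ is finite and the expression in \eqref{eq:avQ2} is the genuine mean. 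Throughout I would keep the empty/nonempty regimes strictly separated and only combine them at the normalization step.
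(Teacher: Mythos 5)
Your plan is essentially the paper's proof: part (i) by the same enumeration over active-user subsets, part (iii) by the same total-probability decomposition, and part (iv) by the same generating-function machinery (the paper writes $S(z)=\sum_i s_i z^{-i}$ for the lower-Hessenberg DTMC of Fig.~\ref{fig:mc_rxtx} and computes $\overline{Q}=-S'(1)=-s_0 K''(1)/L''(1)$, which is your $\Pi'(1)$ up to the sign convention; the constants $4$ and $10$ indeed come from the second factorial moments $A''(1)=2p_1^0+6p_2^0$ and $B''(1)=2-2p_{-1}^1+4p_1^1+10p_2^1$). The one place you genuinely diverge is part (ii): the paper obtains $P(Q=0)$ from the transform identity $P(Q=0)=\frac{1+B'(1)}{1+B'(1)-A'(1)}$ with $A'(1)=-\lambda_0$ and $B'(1)=-1+p_{-1}^1-p_1^1-2p_2^1$, whereas you propose a stationary mean-drift balance, $P(Q=0)\,\lambda_0=P(Q>0)\,(p_{-1}^1-p_1^1-2p_2^1)$, which gives (\ref{eq:probempty2}) more directly and more transparently. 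That argument is valid for this positive-recurrent chain with bounded jumps, and it is arguably cleaner; the paper's route costs more algebra but produces the full generating function, which it then reuses for $\overline{Q}$, so you would still need to set up $S(z)$ (or $\Pi(z)$) anyway for part (iv). One small caution: your phrase "balance the probability flow across the cut between state $0$ and the positive states," taken literally, yields $s_0(p_1^0+p_2^0)=s_1 p_{-1}^1$ (a relation between $s_0$ and $s_1$ only, and note that level crossings of size two complicate a literal cut argument here); what actually delivers (\ref{eq:probempty2}) is the expected-increment identity $\mathbb{E}[\Delta Q]=0$ that you state immediately afterwards, so make sure that is the equation you write down.
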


\begin{proof}
See Appendix~\ref{sec:app_proof_2}.
\end{proof}

\emph{Note:} The values of $q_{0}$ for which the queue is stable are given by $q_{0min}<q_{0}<1$, where $q_{0min}$ is given in (\ref{eq:q0min2}) in Appendix~\ref{sec:app_proof_2}.
Queue stability is an important parameter of quality-of-service (QoS), as it implies finite queue delay (due to bounded queue size). The queueing delay is computed in Section~\ref{sec:delay_analysis}.

\subsection{Symmetric $n$-user Case} \label{sec:relay-results_n}
We now investigate the case of a symmetric $n$-user network\footnote{Our work could be generalized to the asymmetric case; nevertheless the expressions will be significantly involved  without providing any meaningful or crisp insights.}. Each user attempts to transmit in a slot with probability $q$; the success probability to the relay and the destination when $i$ nodes transmit are given by $P_{0,i}$ and $P_{d,i}$, respectively. There are two cases for the $P_{d,i}$, i.e., $P_{d,i,0}$ and $P_{d,i,1}$, denoting the success probability when relay remains silent or transmits, respectively. Those success probabilities for the symmetric case are given by $P_{d,i,j}=P_{d}\left(\frac{1}{1+\gamma_{d}} \right)^{i-1} \left(\frac{1}{1+\beta\gamma_{0}} \right)^{j},\text{ } j=0,1$ and $\beta=\frac{v_{0d}h_{0d}}{v_{d}h_{d}}>1$. $P_{0d,i}=P_{0d}\left(\frac{1}{1+\frac{1}{\beta}\gamma_{d}}\right)^{i}$, $P_{0}=\exp\left(-\frac{\gamma_{0}\eta_{0}}{v_{0}h_{0}}\right)$, $P_{d}=\exp\left(-\frac{\gamma_{d}\eta_{d}}{v_{d}h_{d}}\right)$, $P_{0d}=\exp\left(-\frac{\gamma_{0}\eta_{0}}{v_{0}h_{0}}\right)$. There are two cases for the $P_{0,i}$, i.e., $P_{0,i,0}$ and $P_{0,i,1}$, denoting the success probability when the relay remains silent or transmits respectively. The success probabilities are given by $P_{0,i,0}=P_{0}\left(\frac{1}{1+\gamma_{0}} \right)^{i-1}$ and $P_{0,i,1}=P_{0}\left(1+\gamma_{0}r_{0}^{\alpha}g \right)^{-1}\left(\frac{1}{1+\gamma_{0}} \right)^{i-1}$, where $r_{0}$ is the distance between the users and the relay, $v_i$ is the parameter of the Rayleigh fading random variable at channel $i$, $\alpha$ is the path loss exponent and $g$ is the self-interference coefficient. 

The next theorem summarizes the results for the characteristics of the relay queue for the symmetric $n$-user case.

\begin{thm} \label{thm:nusers}
The key performance measures for the relay queue in the $n$-symmetric user network are provided below.
\begin{itemize}
\item[(i)] The average service rate is
\begin{equation}
\label{eq:mun}
\mu=\sum_{k=0}^{n}{{n \choose k} {q_{0}q^{k}(1-q)^{n-k}}P_{0d,k}}.
\end{equation}
\item[(ii)] The probability that the queue at the relay is empty is
\begin{equation} \label{eq:probemptyn}
P\left( Q=0 \right)=\frac{\displaystyle p_{-1}^{1}-\sum_{i=1}^{n}{ip_{i}^{1}}}{\displaystyle p_{-1}^{1}-\sum_{i=1}^{n}{ip_{i}^{1}}+\lambda_{0}}.
\end{equation}
\item[(iii)]The average arrival rate $\lambda$ is
\begin{equation}
\label{eq:lamdan}
\lambda=P\left(Q=0\right)\lambda_{0}+P\left(Q>0\right)\lambda_{1}.
\end{equation}
The expressions for $\lambda_{0}$ and $\lambda_{1}$ are given in Appendix~\ref{sec:app_proof_n}.
\item[(iv)]The average relay queue size $Q$ is

\begin{equation}
{\scriptsize
\label{eq:avQn}
\begin{aligned}
\overline{Q}=\frac{\displaystyle \left(\sum_{i=1}^{n}{ip_{i}^{1}}-p_{-1}^{1} \right)\sum_{i=1}^{n}{i(i+3)p_{i}^{0}}+\lambda_{0}\left(2p_{-1}^{1}-\sum_{i=1}^{n}{i(i+3)p_{i}^{1}} \right)}{\displaystyle 2\left(\sum_{i=1}^{n}{ip_{i}^{1}}-p_{-1}^{1} \right) \left(p_{-1}^{1}-\sum_{i=1}^{n}{ip_{i}^{1}}+\lambda_{0} \right)}.
\end{aligned}
}
\end{equation}
\end{itemize}
\end{thm}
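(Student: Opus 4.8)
The plan is to model the relay queue as a discrete-time Markov chain on $\{0,1,2,\dots\}$ whose one-step increment law is state dependent: from the empty state the relay holds no packet and cannot transmit, so the backlog grows by $i$ with probability $p_i^0$; from any nonempty state the relay attempts a transmission, so the backlog changes by $i\in\{-1,0,\dots,n\}$ with probability $p_i^1$, where $p_{-1}^1$ is the probability of a net departure. Part (i) does not require this chain and follows from the total probability theorem: the relay serves a packet only when it transmits (probability $q_0$), and given that exactly $k$ of the $n$ symmetric users are simultaneously active, an event of probability $\binom{n}{k}q^k(1-q)^{n-k}$, the relay-to-destination link succeeds with probability $P_{0d,k}$; summing over $k$ gives (\ref{eq:mun}).

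For parts (ii)--(iv) I would pass to the stationary probability generating function $\Pi(z)=\sum_{k\ge 0}\pi_k z^k$. Setting $A_0(z)=\sum_{i=0}^{n}p_i^0 z^i$ and $A_1(z)=p_{-1}^1 z^{-1}+\sum_{i=0}^{n}p_i^1 z^i$, the global balance equations, after multiplying by $z^k$ and summing over $k$, collapse to
\begin{equation}
\Pi(z)\bigl(1-A_1(z)\bigr)=\pi_0\bigl(A_0(z)-A_1(z)\bigr),
\end{equation}
hence $\Pi(z)=\pi_0\,(A_0(z)-A_1(z))/(1-A_1(z))$. By Loynes' theorem the stationary law exists exactly when the nonempty drift is negative, i.e.\ $A_1'(1)=\sum_{i=1}^{n}i p_i^1-p_{-1}^1<0$, which is the stability condition later written as $q_0>q_{0\min}$. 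Part (ii) then follows from the normalization $\Pi(1)=1$: since $A_0(1)=A_1(1)=1$ both factors vanish at $z=1$, and one application of L'H\^{o}pital's rule gives $\Pi(1)=\pi_0(A_0'(1)-A_1'(1))/(-A_1'(1))$; inserting $A_0'(1)=\lambda_0$ and $A_1'(1)=\sum_i i p_i^1-p_{-1}^1$ and solving for $\pi_0$ reproduces (\ref{eq:probemptyn}). Part (iii) is immediate from the law of total expectation applied to the per-slot arrivals, conditioning on whether the queue is empty, which yields $\lambda=P(Q=0)\lambda_0+P(Q>0)\lambda_1$ with $P(Q>0)=1-P(Q=0)$.

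The substantive computation is part (iv), where $\overline{Q}=\Pi'(1)$. Writing $N(z)=\pi_0(A_0(z)-A_1(z))$ and $D(z)=1-A_1(z)$, each has a simple zero at $z=1$, so I would expand both to second order about $z=1$ and use $\Pi'(1)=(N''(1)D'(1)-N'(1)D''(1))/(2D'(1)^2)$. Here $D'(1)=p_{-1}^1-\sum_i i p_i^1$, while the second derivatives introduce the factorial moments $A_0''(1)=\sum_i i(i-1)p_i^0$ and $A_1''(1)=2p_{-1}^1+\sum_i i(i-1)p_i^1$. Substituting $\pi_0$ from part (ii) cancels one power of the common factor $p_{-1}^1-\sum_i i p_i^1$ against $D'(1)^2$, and recombining each second factorial moment with four times the corresponding first moment, through $i(i-1)+4i=i(i+3)$, collapses the numerator to the symmetric form (\ref{eq:avQn}). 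I expect the main obstacle to lie in this second-order bookkeeping and the cancellations that produce the compact $i(i+3)$ weights, together with the separate and lengthier task --- carried out in Appendix~\ref{sec:app_proof_n} --- of assembling $\lambda_0$, $\lambda_1$, and the increment probabilities $p_i^j$ from the link success probabilities $P_{0,i,j}$, $P_{d,i,j}$, and $P_{0d,i}$ by averaging over the binomially distributed number of active users and over which received packets fail their direct link to the destination.
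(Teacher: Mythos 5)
Your proposal is correct and follows essentially the same route as the paper: both model the relay queue as a discrete-time Markov chain with a lower-Hessenberg transition structure (increment law $p_i^0$ from the empty state, $p_i^1$ with a possible $-1$ step otherwise) and extract $P(Q=0)$ and $\overline{Q}$ from the stationary generating function via first- and second-order evaluations at $z=1$; the paper merely works with the $z^{-i}$ transform convention and cites Gebali for the ratio formula and the identities $\overline{Q}=-S'(1)=s_0 K''(1)/L''(1)$, whereas you derive the balance equation and the L'H\^{o}pital/Taylor steps from scratch. Your identity $i(i-1)+4i=i(i+3)$ correctly reproduces the paper's weights, and your expression for $\Pi'(1)$ algebraically matches (\ref{eq:avQn}).
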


\begin{proof}
See Appendix~\ref{sec:app_proof_n}.
\end{proof}

The values of $q_{0}$ for which the queue is stable are given by $q_{0min}<q_{0}<1$, where $q_{0min}$ is given in (\ref{eq:q0minn}) in Appendix~\ref{sec:app_proof_n}.

\section{Throughput Analysis} \label{sec:thr_analysis}
In the previous section, we provided the main results on the relay queue characteristics, including the empty queue probability and the average queue length. Here, we derive the per-user throughput and the network aggregate throughput with one cooperative relay and $n$ users.

The per-user throughput, $T_i$ for the $i$-th user is given by $T_i=T_{D,i}+T_{R,i}$, where $T_{D,i}$ denotes the direct throughput from user $i$ to the destination, i.e., the transmitted packet reaches the destination directly, without using the relay. When the transmission to the destination is not successful, and at the same time the relay node receives the packet correctly, then it stores it to its queue, and the contributed throughput by the relay for the user $i$ is denoted by $T_{R,i}$.
When the queue at the relay is stable, $T_{R,i}$ is the arrival rate from user $i$ to the queue.

The term $T_{D,i}$ can also be interpreted as the probability that a transmitted packet from user $i$ reaches the destination directly, and $T_{R,i}$ is the probability of unsuccessful transmission from user $i$ to the destination while the packet is received at the relay.

The percentage of $i$-th user's traffic that is being relayed is $\frac{T_{R,i}}{T_i}$.

In the following subsection, we provide expressions for $T_{D,i}$ and $T_{R,i}$ for the two-user and the symmetric $n$-user cases.

\subsection{Per-user and Aggregate Throughput: Two-user Case}
The direct throughput to the destination for the $i$-th user, $T_{D,i}$, is given by
\begin{equation}
\begin{aligned}
T_{D,i} = q_{0}P\left(Q>0\right)q_{i} \left[  (1-q_{j}) P_{i/0,i}^{d}+q_{j} P_{i/0,i,j}^{d} \right]+ \\
+\left[1-q_{0}P\left(Q>0\right)\right]q_{i} \left[ (1-q_{j}) P_{i/i}^{d}+q_{j} P_{i/i,j}^{d} \right].
\end{aligned}
\end{equation}

When the relay queue is stable, the contributed throughput to user $i$, $T_{R,i}$, is the arrival rate from user $i$ to the relay queue.
Note that a packet from user $i$ enters the relay queue when the transmission to the destination is not successful and at the same time the relay is
able to decode that packet. The relayed throughput $T_{R,i}$ of user $i$ is given by

\begin{equation}
{\scriptsize
\begin{aligned}
T_{R,i}=q_{0}P\left(Q>0\right)q_{i} \left[  (1-q_{j}) (1-P_{i/0,i}^{d})P_{i/0,i}^{0} +q_{j} (1-P_{i/0,i,j}^{d})P_{i/0,i,j}^{0} \right]+\\
+\left[1-q_{0}P\left(Q>0\right)\right]q_{i} \left[ (1-q_{j}) (1-P_{i/i}^{d})P_{i/i}^{0} +q_{j} (1-P_{i/i,j}^{d})P_{i/i,j}^{0}\right].
\end{aligned}
}
\end{equation}

The throughput $T_{i}$ for the $i$-th user is given by
\begin{equation}
\label{eq:thr2_1}
T_{i}= T_{D,i} + T_{R,i}.
\end{equation}

In the above equations, the queue is assumed to be stable, hence the arrival rate from each user to the queue is a contribution to the overall throughput. The aggregate throughput is $T_{aggr} = T_{1} + T_{2}$. Notice that the per-user throughput is independent of $q_{0}$ as long as it is in the stability region. This is due to the fact that the product $q_{0}P\left(Q>0\right)$ is constant (does not depend on $q_0$). The proof is straightforward and thus is omitted.

\subsection{Per-user and Aggregate Throughput: Symmetric $n$-user Case}
In this subsection, we provide expressions for the direct and the relayed per-user and aggregate throughput.
The notation used in Section~\ref{sec:relay-results_n} applies here as well. Furthermore, the per-user throughput is denoted by $T$, the direct throughput to the destination by $T_D$, and the relayed throughput by $T_R$.

The direct throughput $T_{D}$ is given by
\begin{equation}
\begin{aligned}
T_{D}=q_{0}P\left(Q>0\right)\sum_{k=0}^{n-1}{{n-1 \choose k}q^{k+1}(1-q)^{n-1-k} P_{d,k+1,1}}+\\
+\left[1-q_{0}P\left(Q>0\right)\right]\sum_{k=0}^{n-1}{{n-1 \choose k}q^{k+1}(1-q)^{n-1-k} P_{d,k+1,0}}.
\end{aligned}
\end{equation}

The throughput contributed by the relay (when the queue at the relay is stable), $T_{R}$, is given by
\begin{equation}
{\scriptsize
\begin{aligned}
T_{R}=q_{0}P\left(Q>0\right)\sum_{k=0}^{n-1}{{n-1 \choose k}q^{k+1}(1-q)^{n-1-k} (1-P_{d,k+1,1})P_{0,k+1,1}}+ \\
+\left[1-q_{0}P\left(Q>0\right)\right]\sum_{k=0}^{n-1}{{n-1 \choose k}q^{k+1}(1-q)^{n-1-k} \left(1-P_{d,k+1,0}\right)P_{0,k+1,0}}.
\end{aligned}
}
\end{equation}

The per-user throughput $T$ for the cooperative relay network when the relay queue is stable is given by
\begin{equation}
\label{eq:thrn}
T = T_D + T_R.
\end{equation}

The aggregate throughput is $T_{aggr} = n T$.

\begin{remark}
When the queue is unstable, the aggregate throughput is the summation of the direct throughput among the users and the destination plus the service rate of the relay. However, when the queue is unstable, the queue size increases to infinity, thus there is no guarantee for finite queueing delay.
\end{remark}

\section{Delay Analysis} \label{sec:delay_analysis}
In Section~\ref{sec:analysis}, we studied the performance of the relay queue in terms of the probability of empty queue and the average queue length.
That section was an intermediate step for our main goal, which is to study the impact of the relay node in the network in terms of throughput and the delay. In the previous section, we obtained the per-user and the aggregate throughput for a relay network with stable relay queue and commented on the case of unstable relay queue.
In this section, we analyze an important network performance measure, the delay, and derive analytical expressions for the average delay required to deliver a packet from the source to the destination.

\begin{thm}
\label{thm:delay}
The average delay for a packet received at the destination when it is in the head of the user queue is given by
\begin{equation}
\label{eq:delay}
%D_i=\frac{1+T_{R,i}\left(\frac{\overline{Q}}{\lambda} + \frac{1}{\mu} \right)}{T_{D,i}+T_{R,i}}=\frac{1+T_{R,i}\left(\frac{\overline{Q}}{\lambda} + \frac{1}{\mu} \right)}{T_i},
D_i=\frac{1+T_{R,i}\left(\frac{\overline{Q}}{\lambda} + \frac{1}{\mu} \right)}{T_i},
\end{equation}
\end{thm}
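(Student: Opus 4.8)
The plan is to view the lifetime of a tagged packet of user $i$, measured from the instant it reaches the head of the (saturated) user queue, as a two-phase process and to combine the mean durations of the phases by the law of total expectation. In phase one the packet sits at the head of the user queue and is retransmitted until it leaves that position; in phase two, which is present only when the packet is routed through the relay, it waits in the relay queue and is then forwarded to the destination. Since every head-of-line packet eventually leaves (either directly or into the relay) and, the relay queue being stable, every relayed packet is eventually served, each packet reaches the destination with probability one, so no conditioning on eventual success is required and $D_i$ is simply the mean of this two-phase duration.

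First I would handle phase one. Because the source is saturated, a packet is always present at the head of its queue, and in a given slot it leaves that position exactly when its transmission succeeds directly at the destination or, failing that, is captured by the relay. By the very definitions of $T_{D,i}$ and $T_{R,i}$ in Section~\ref{sec:thr_analysis} (which already average over the relay's transmit state through the factor $q_{0}P(Q>0)$), the effective per-slot departure probability from the head of the user queue is $T_{D,i}+T_{R,i}=T_i$. Treating this averaged rate as the parameter of a geometric sojourn gives mean phase-one duration $1/T_i$, and the departure \emph{type} in the terminating slot is independent of the slot index, so a departing packet is delivered directly with probability $T_{D,i}/T_i$ and handed to the relay with probability $T_{R,i}/T_i$.

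Next I would quantify phase two, the relay sojourn of a handed-off packet. Appealing to Little's theorem on the stable, stationary relay queue, whose average occupancy $\overline{Q}$ is supplied by Theorems~\ref{thm:2users} and~\ref{thm:nusers} and whose arrival rate is $\lambda$, the mean waiting time in the relay buffer is $\overline{Q}/\lambda$. To this I add the mean service time $1/\mu$, obtained as the expectation of the geometric number of slots needed for a successful relay-to-destination transmission with per-slot success probability $\mu$. The relay contribution to the delay is thus $\overline{Q}/\lambda+1/\mu$. Combining the two phases by total expectation then yields
\[
D_i=\frac{1}{T_i}+\frac{T_{R,i}}{T_i}\left(\frac{\overline{Q}}{\lambda}+\frac{1}{\mu}\right)=\frac{1+T_{R,i}\left(\frac{\overline{Q}}{\lambda}+\frac{1}{\mu}\right)}{T_i},
\]
which is the claimed expression.

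I expect the main obstacle to be the rigorous decoupling of the two phases and of the user dynamics from the instantaneous relay state. Concretely, using the averaged throughput $T_i$ as a constant effective head-of-line departure rate is a mean-value step, since the true per-slot success probability fluctuates with whether the relay is transmitting; and invoking Little's theorem presumes that the delay seen by a packet arriving at the relay coincides with the steady-state mean sojourn rather than a quantity correlated with the arrival instant. One must also verify that the slot in which the packet migrates from the user to the relay is charged exactly once, so that no off-by-one term appears, and that the convention for $\overline{Q}$ (whether it counts the packet in service) is matched consistently to the split between the $\overline{Q}/\lambda$ waiting term and the $1/\mu$ service term. Once the relay queue is taken to be stationary and stable, this remaining bookkeeping is routine.
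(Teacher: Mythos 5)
Your proposal is correct and follows essentially the same route as the paper's proof: the paper sets up the first-step recursion $D_i=T_{D,i}+T_{R,i}(1+D_R)+(1-T_{D,i}-T_{R,i})(1+D_i)$, which is just the recursive form of your geometric head-of-line sojourn with mean $1/T_i$ and branching probabilities $T_{D,i}/T_i$, $T_{R,i}/T_i$, and it likewise obtains the relay sojourn as $D_R=\overline{Q}/\lambda+1/\mu$ via Little's law plus the geometric service time. The mean-field caveats you flag (using the averaged $T_i$ as a constant per-slot departure probability, and equating a tagged packet's relay sojourn with the stationary Little's-law average) are indeed implicit and unaddressed in the paper's own argument.
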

where $T_{R,i}$ and $T_i$ is the $i$-th user relayed and per-user throughput, respectively. $\lambda$ and $\mu$ is the average arrival and service rate of the relay, respectively, and $\overline{Q}$ is the average queue length of the relay.
\begin{proof}
See Appendix~\ref{sec:app_delay}.
\end{proof}

The expressions for $T_{R,i}$ and $T_i$ are given in Section~\ref{sec:thr_analysis}. The expressions for $\lambda$, $\mu$, $\overline{Q}$  are summarized in Theorem~\ref{thm:2users} and~\ref{thm:nusers} for the two-user and the symmetric $n$-user case, respectively.

Note that the term $\frac{\overline{Q}}{\lambda}$ in (\ref{eq:delay}) is the queueing delay, which is the time a packet spends in queue,
the time the packet is assigned to the queue for transmission and the time it starts being transmitted. In the meantime, the packet waits while other packets in the queue are transmitted.

\begin{remark}
When the relay queue is unstable, the average queue length can be arbitrarily large, thus the average queueing delay
tends to infinity. In (\ref{eq:delay}), when the queue is unstable, then the average delay also tends to infinity. In the case of unstable queues, flow control policies could be applied for packet dropping, however this is beyond the scope of our paper.
\end{remark}

\section{Numerical Results}
\label{sec:results}
In this section, we provide numerical results to validate the above theoretical performance analysis.
For exposition convenience, we consider the case where all users have the same link characteristics and transmission probabilities. The parameters used in the numerical results are as follows: distances are $r_{d}=130$, $r_{0}=60$, and $r_{0d}=80$ in meters, the path loss exponent is $\alpha=4$, and the receiver noise power $\eta=10^{-11}$. The transmit power for the relay is $P_{tx}(0)=10$ mW and for the $i$-th user is $P_{tx}(i)=1$ mW.

\subsection{Per user and Aggregate Throughput}
Figs.~\ref{fig:thr_n_02} and~\ref{fig:thr_n_06} present the per-user throughput versus the number of users in the network for different values of $q$ and $g$, and for $\gamma=0.2$ and $\gamma=0.6$, respectively. Figs.~\ref{fig:athr_n_02} and~\ref{fig:athr_n_06} show the aggregate throughput versus the number of users.
When $\gamma=0.2$, we observe that for $g=10^{-10}$ and $g=10^{-8}$ (almost perfect self-interference cancelation) the relay queue is unstable for relative small number of users. This is because for small values of $\gamma$, it is more likely to have more successful transmissions from the users to the relay, while at the same time the relay can transmit at most one packet per timeslot. For $\gamma=0.6$ the queue is never unstable for the selected set of parameters, while for $g=10^{-10}$ and $g=10^{-8}$, throughput gains are evident as compared to no self-interference cancelation.

\begin{figure}[h!]
\centering
\subfigure[Per-user throughput vs. the number of users.]{
\includegraphics[scale=0.6]{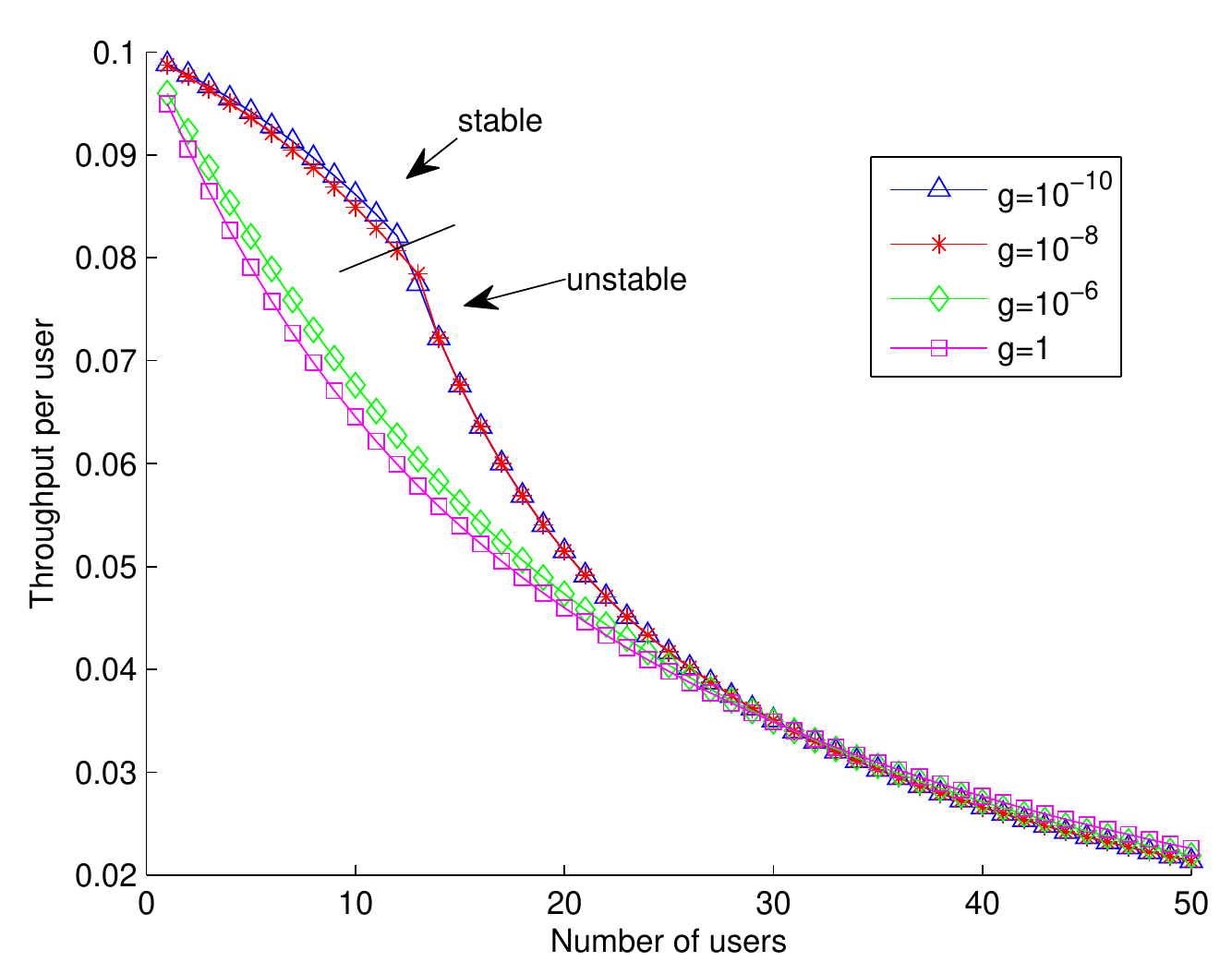}
\label{fig:thr_n_02}
}
\subfigure[Aggregate throughput vs. the number of users.]{
\includegraphics[scale=0.6]{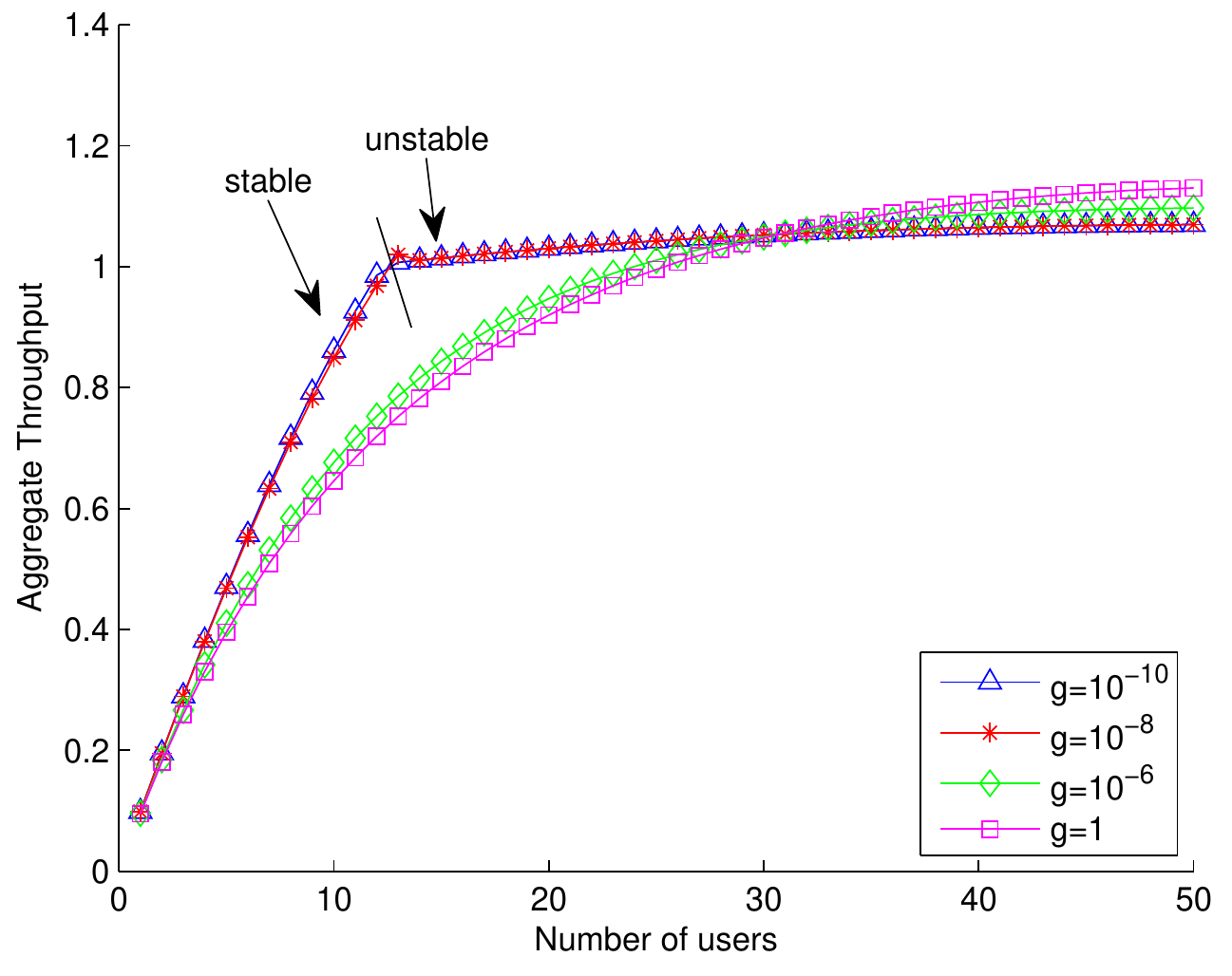}
\label{fig:athr_n_02}
}
%\label{fig:n02}
\caption{Per-user and aggregate throughput vs. the number of users for $\gamma=0.2$, $q=0.1$ and $q_0 = 0.95$.}
\end{figure}

In Figs.~\ref{fig:percent_02} and~\ref{fig:percent_06}, we plot the percentage of traffic that is being relayed in the network (cf. Section~\ref{sec:thr_analysis}) for $\gamma=0.2$ and $\gamma=0.6$ respectively, for the case of a stable queue.

\begin{figure}[h!]
\centering
\includegraphics[scale=0.6]{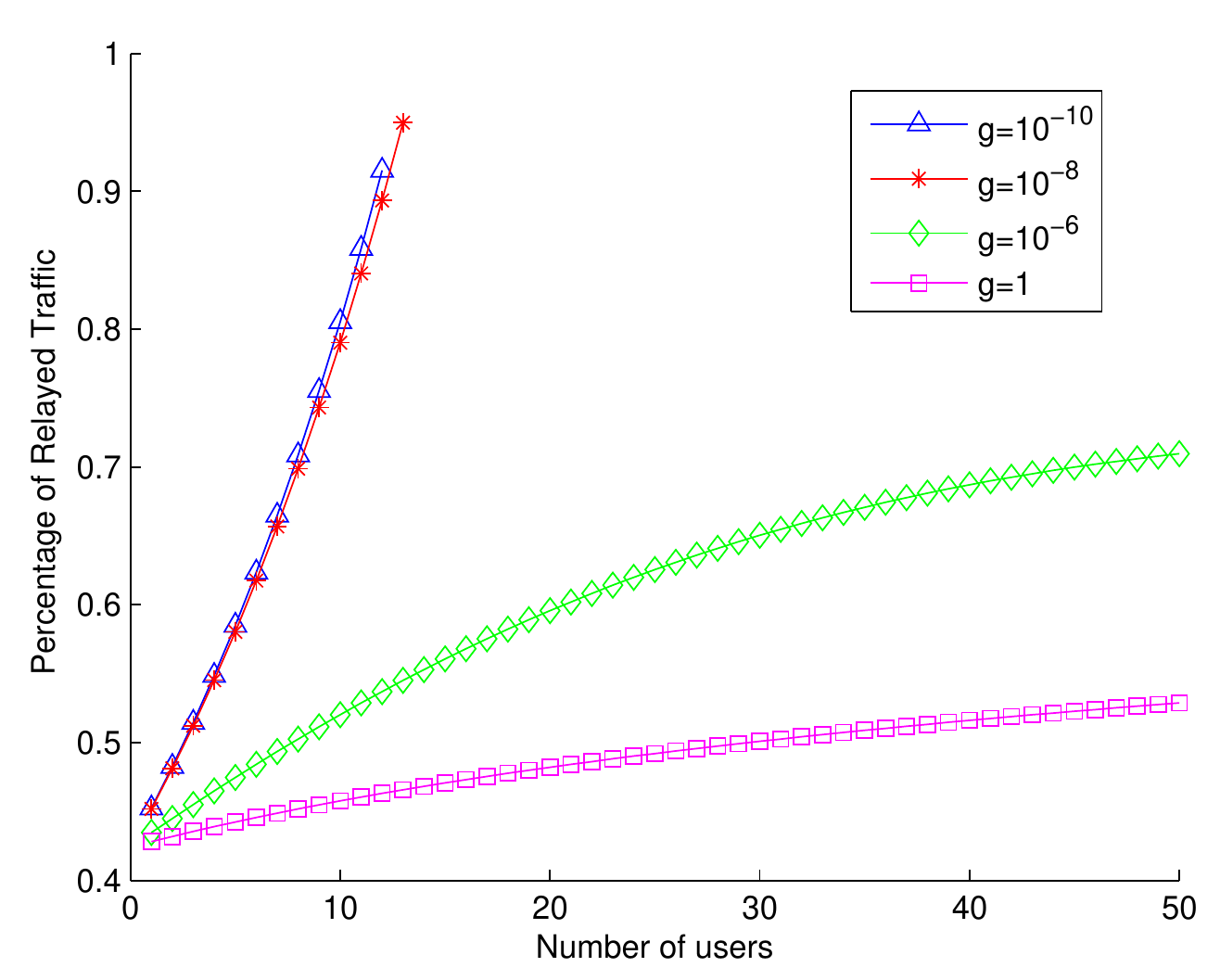}
\caption{Percentage of traffic that is being relayed vs. the number of users $\gamma=0.2$, $q=0.1$, and $q_0 = 0.95$.}
\centering
\label{fig:percent_02}
\end{figure}

\begin{figure}[h!]
\centering
\subfigure[Per-user throughput vs. the number of users.]{
\includegraphics[scale=0.6]{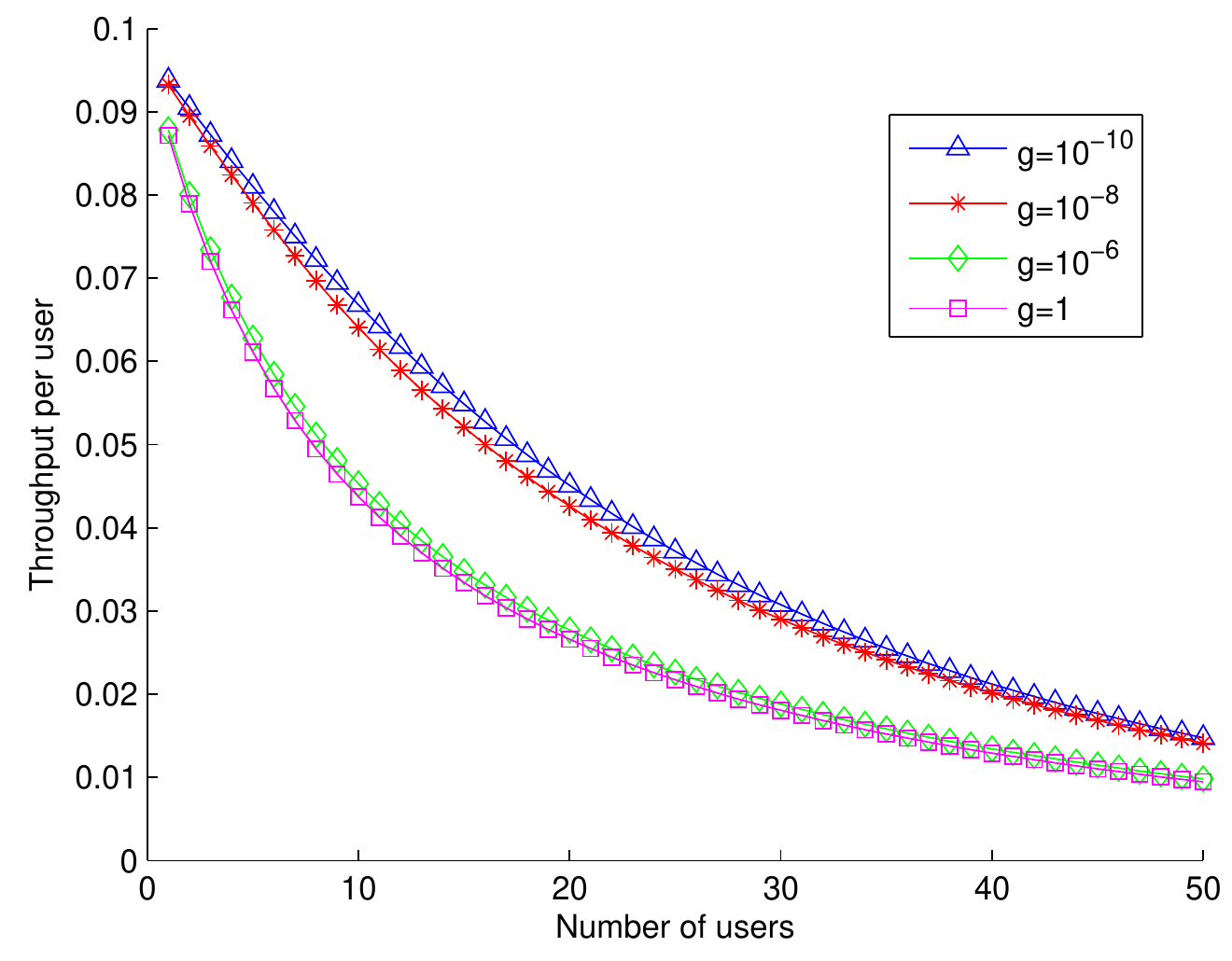}
\label{fig:thr_n_06}
}
\subfigure[Aggregate throughput vs. the number of users.]{
\includegraphics[scale=0.6]{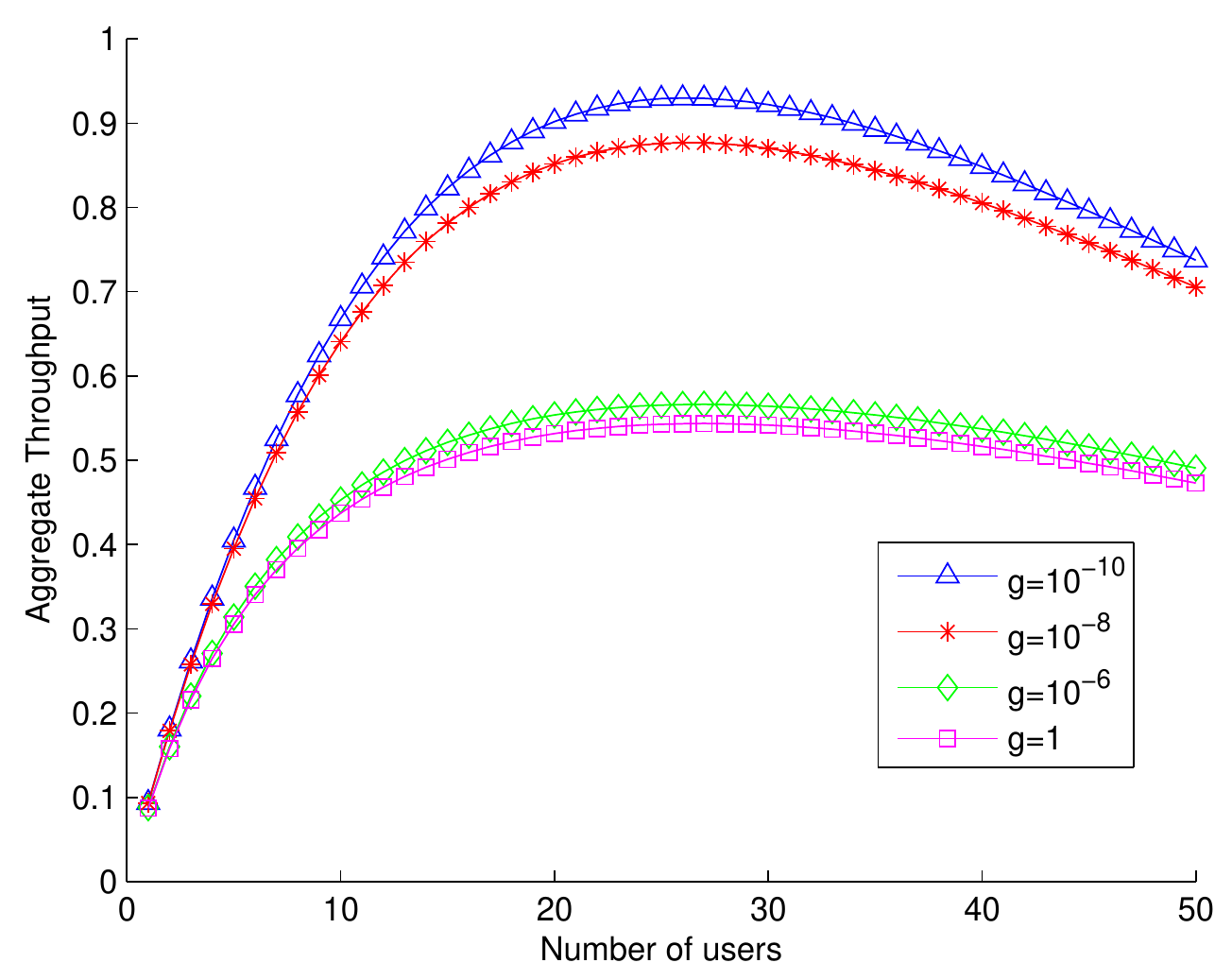}
\label{fig:athr_n_06}
}
%\label{fig:n02}
\caption{Per-user and aggregate throughput vs. the number of users for $\gamma=0.6$, $q=0.1$, and $q_0 = 0.99$.}
\end{figure}

\begin{figure}[h!]
\centering
\includegraphics[scale=0.6]{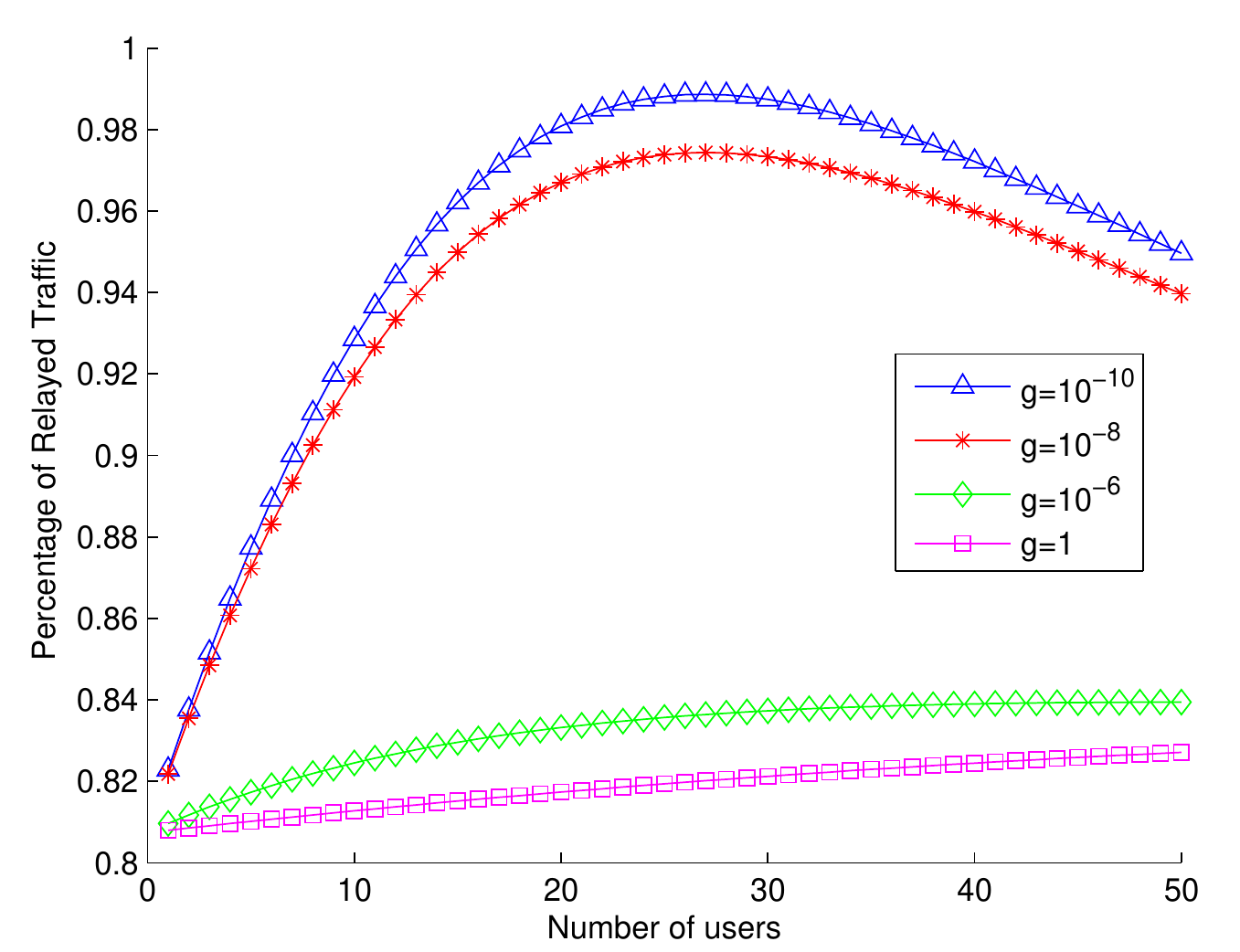}
\caption{Percentage of traffic that is being relayed vs. the number of users $\gamma=0.6$, $q=0.1$, and $q_0 = 0.99$.}
\centering
\label{fig:percent_06}
\end{figure}

Figs.~\ref{fig:thr_n_12} and~\ref{fig:thr_n_25} present the per-user throughput versus the number of users in the network for different values of $q$ and $g$, and for $\gamma=1.2$ and $\gamma=2.5$, respectively. Figs.~\ref{fig:athr_n_12} and~\ref{fig:athr_n_25} show the aggregate throughput versus the number of users. Finally, Figs.~\ref{fig:percent_12}
and~\ref{fig:percent_25} show the percentage of traffic that is being relayed.

Note that when the percentage tends to $1$ (or $100\%$), the contributed throughput by the relay tends to be the total network throughput.

The gains from the relay are more pronounced for large $\gamma$, whilst in the case of $\gamma=0.2$ and quasi perfect self-interference cancelation, we tend to have an unstable queue, which affects the delay per packet as we will see in the next subsection.

\begin{figure}[h!]
\centering
\subfigure[Per-user throughput vs. the number of users.]{
\includegraphics[scale=0.6]{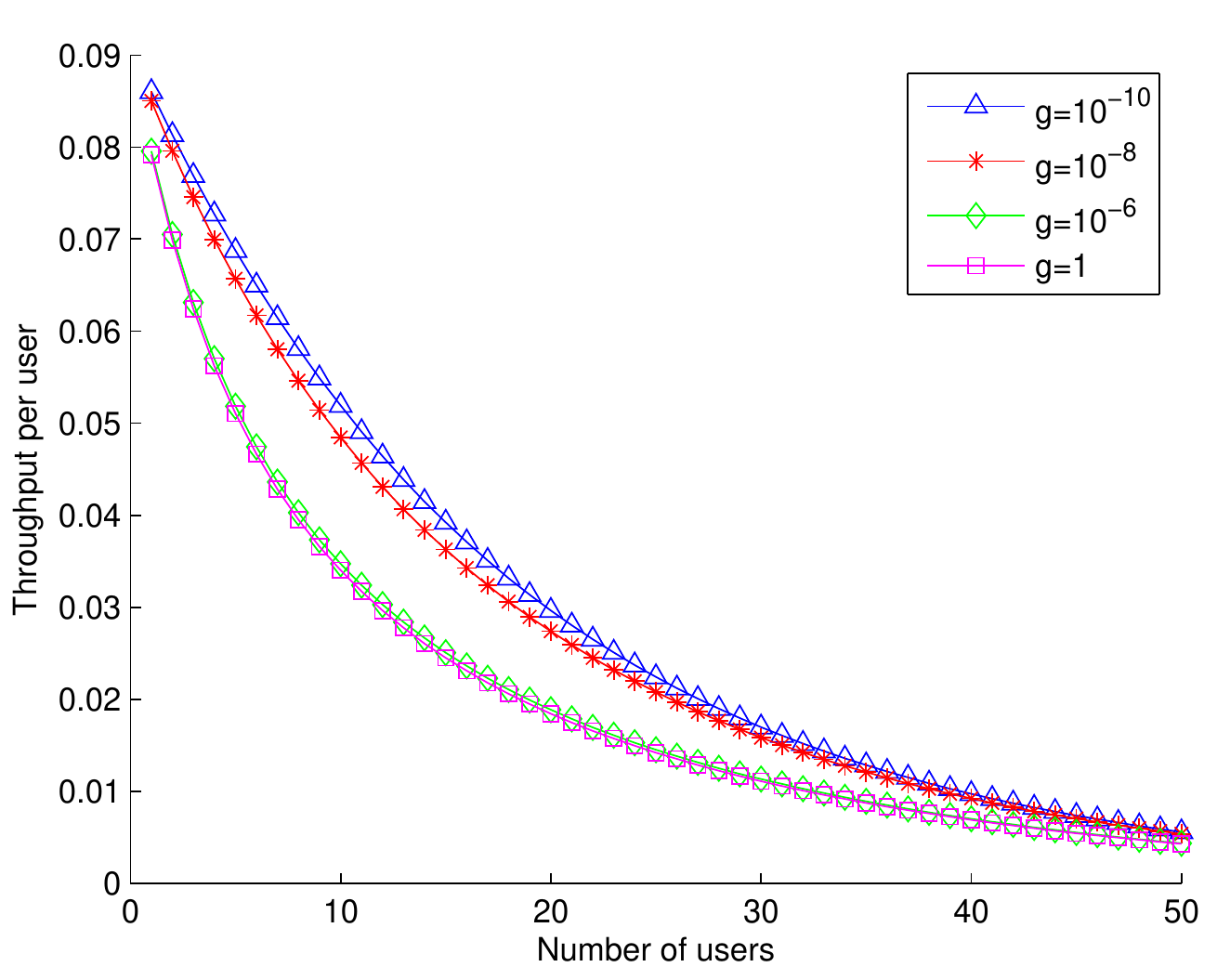}
\label{fig:thr_n_12}
}
\subfigure[Aggregate throughput vs. the number of users.]{
\includegraphics[scale=0.6]{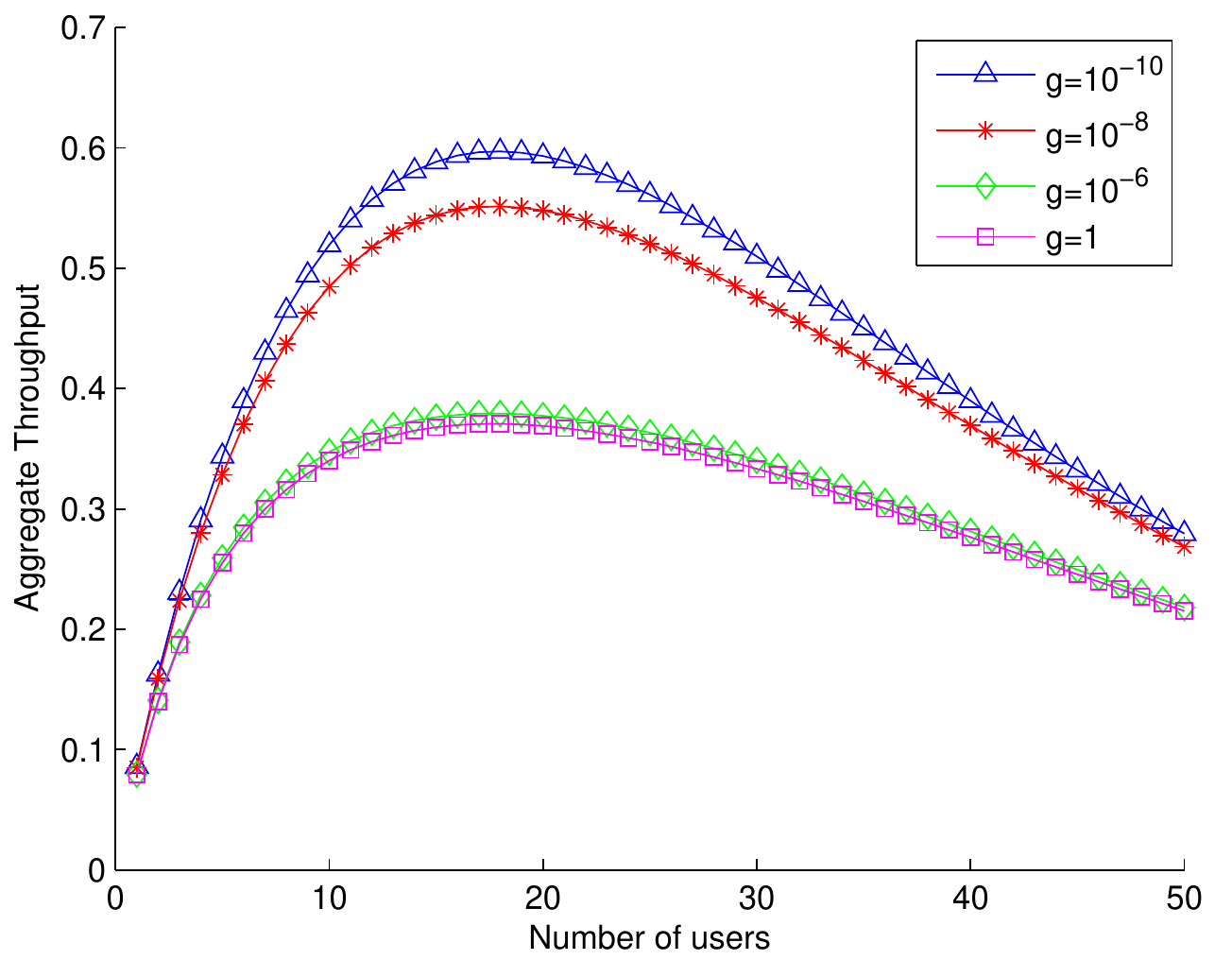}
\label{fig:athr_n_12}
}
%\label{fig:n02}
\caption{Per-user and aggregate throughput vs. the number of users for $\gamma=1.2$, $q=0.1$, and $q_0 = 0.99$.}
\end{figure}

\begin{figure}[h!]
\centering
\includegraphics[scale=0.6]{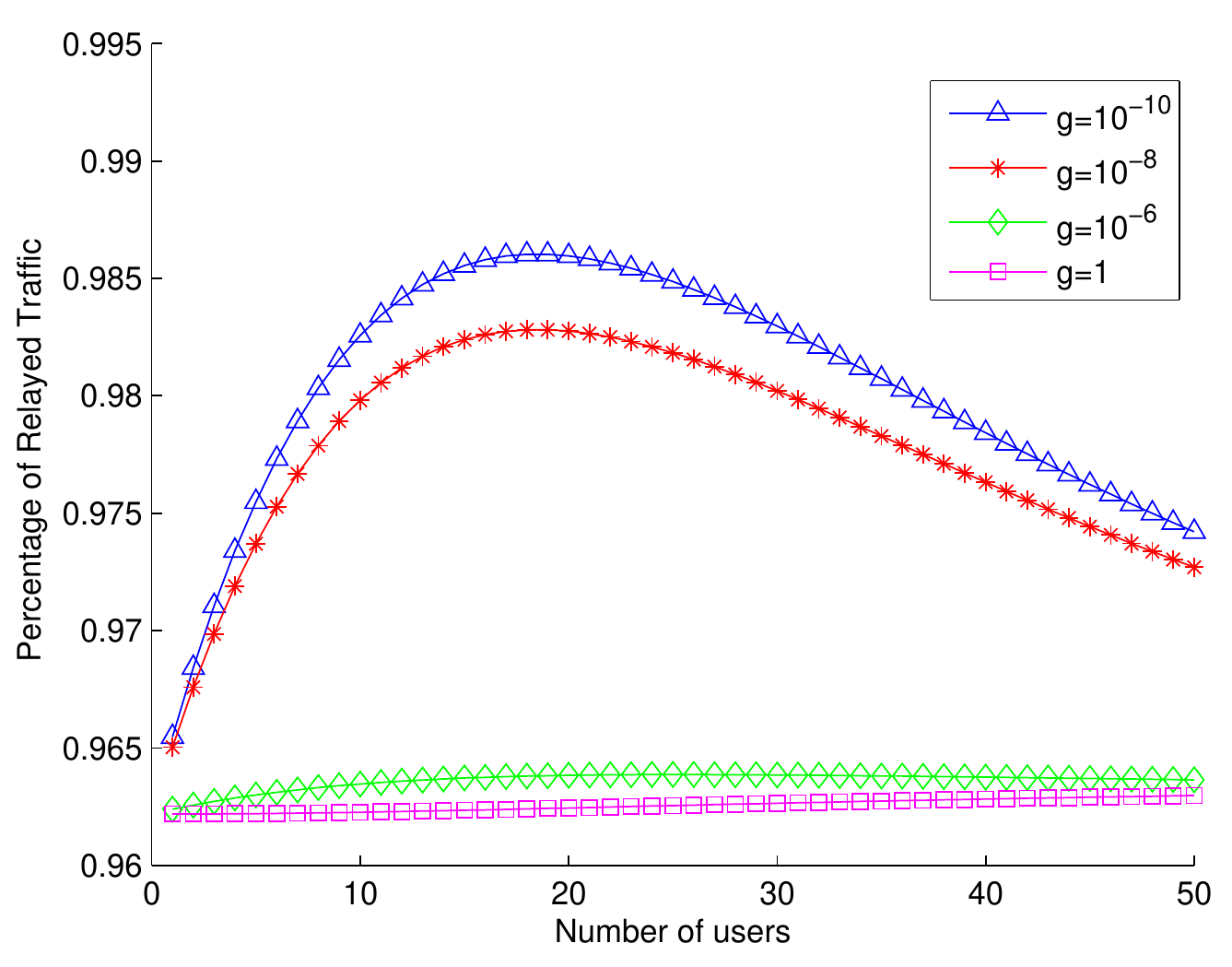}
\caption{Percentage of traffic that is being relayed vs. the number of users $\gamma=1.2$, $q=0.1$, and $q_0 = 0.99$.}
\centering
\label{fig:percent_12}
\end{figure}

\begin{figure}[h!]
\centering
\subfigure[Per-user throughput vs. the number of users.]{
\includegraphics[scale=0.6]{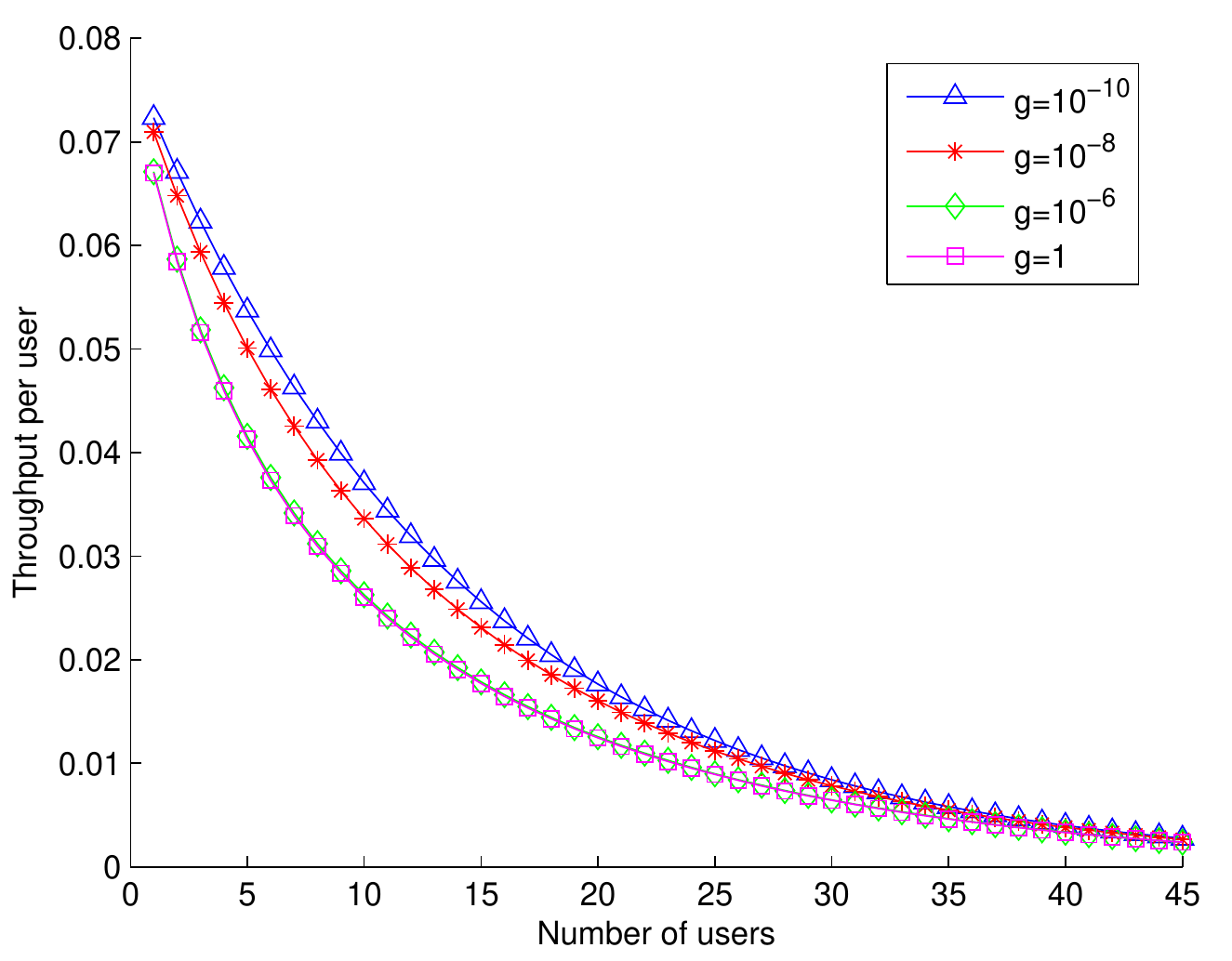}
\label{fig:thr_n_25}
}
\subfigure[Aggregate throughput vs. the number of users.]{
\includegraphics[scale=0.6]{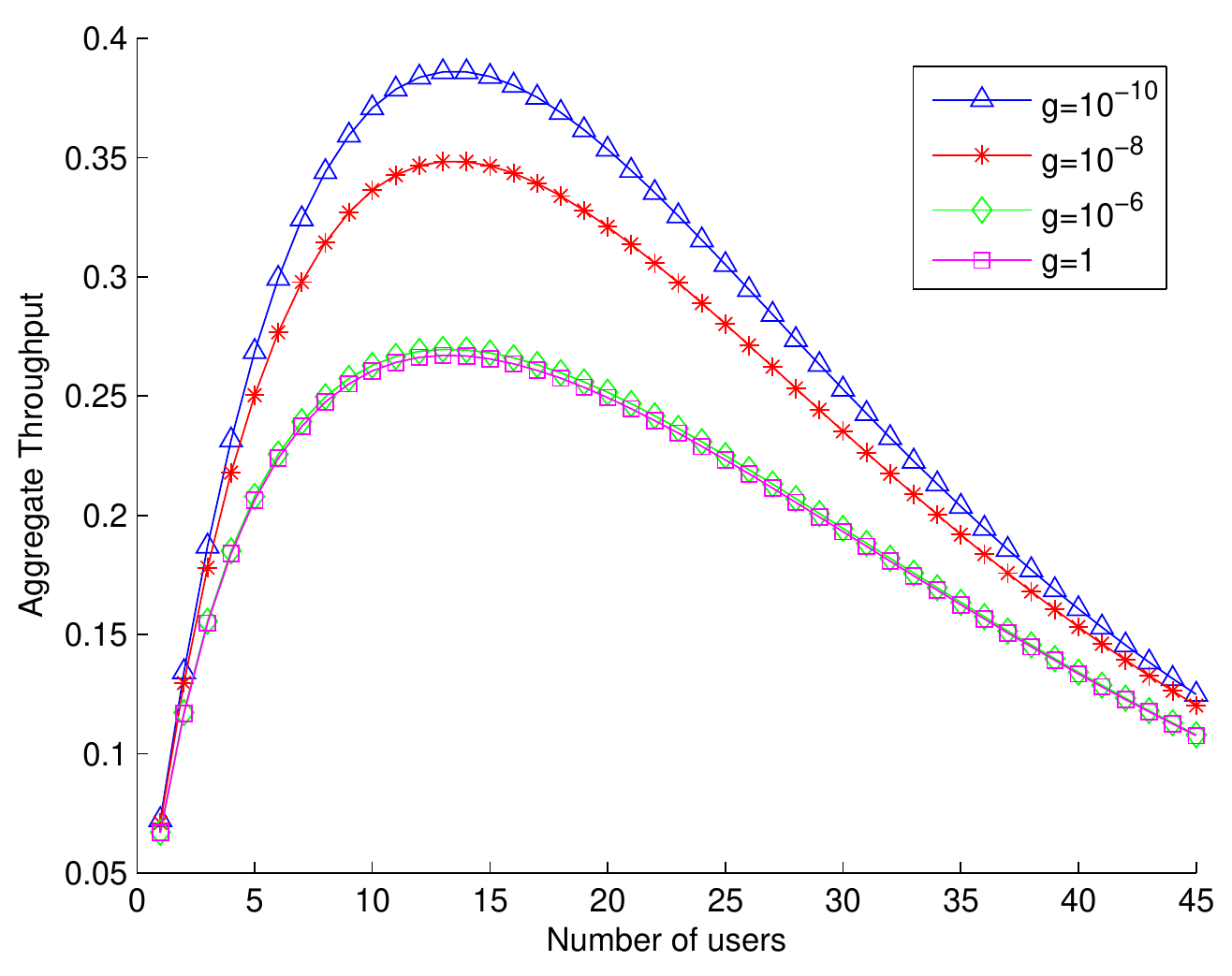}
\label{fig:athr_n_25}
}
%\label{fig:n02}
\caption{Per-user and aggregate throughput vs. the number of users for $\gamma=2.5$, $q=0.1$, and $q_0 = 0.99$.}
\end{figure}

\begin{figure}[h!]
\centering
\includegraphics[scale=0.6]{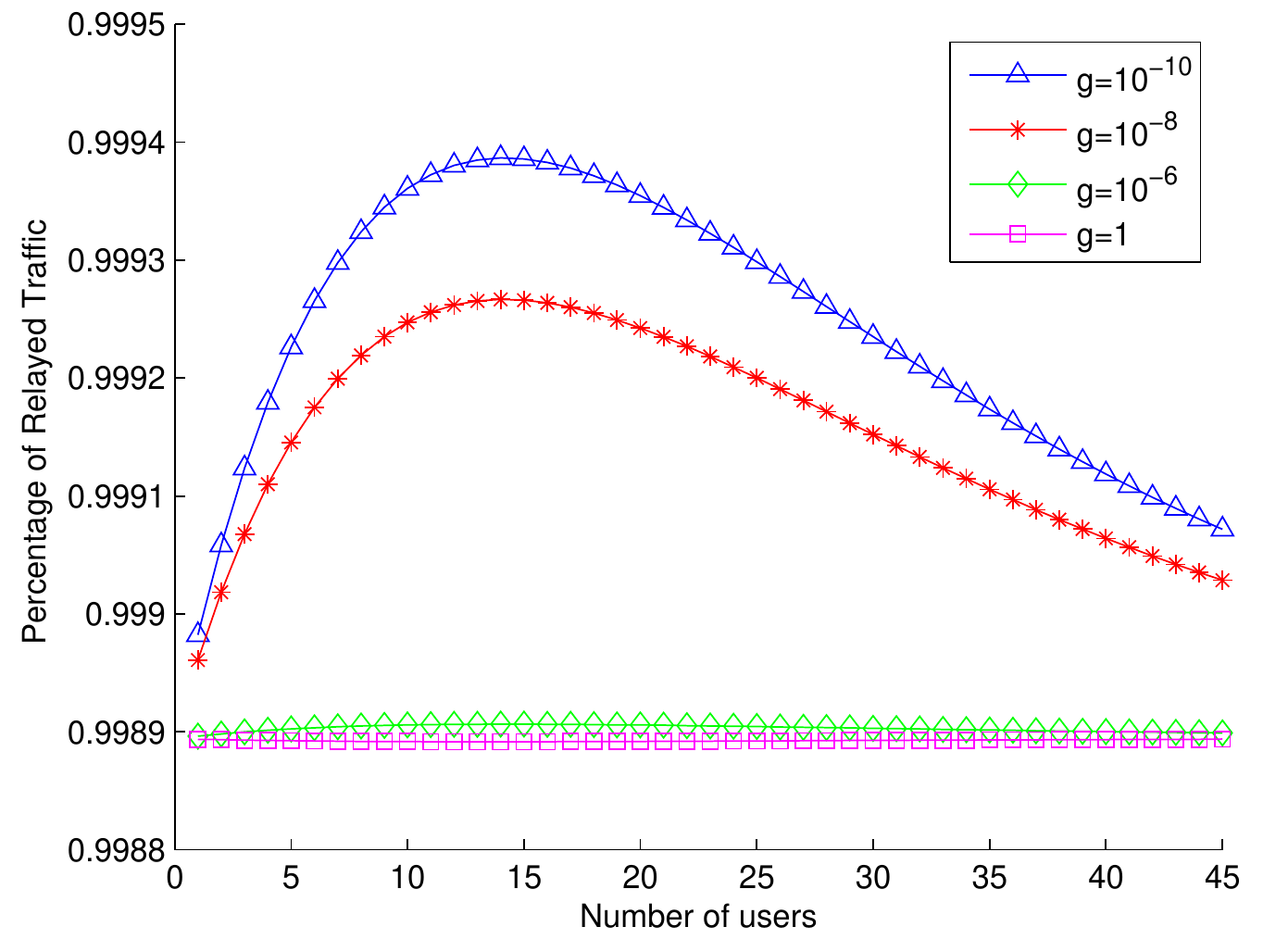}
\caption{Percentage of traffic that is being relayed vs. the number of users $\gamma=2.5$, $q=0.1$ and $q_0 = 0.99$.}
\centering
\label{fig:percent_25}
\end{figure}

\subsection{Average Queue Length and Average Delay per Packet}
In this subsection, we provide numerical results for two key performance metrics, namely the average relay queue size and the average delay per packet.

Figs.~\ref{fig:avQ_n_02} and~\ref{fig:avQ_n_06} present the average queue length of the relay for $\gamma=0.2$ and $\gamma=0.6$. The average queue length is among the factors that affect the average delay per packet as presented in (\ref{eq:delay}) of Theorem~\ref{thm:delay}. Figs.~\ref{fig:Delay_n_02} and~\ref{fig:Delay_n_06} illustrate the average delay per packet for $\gamma=0.2$ and $\gamma=0.6$.

For $\gamma=0.2$ we included the per-packet delay for the network without the relay for comparison reasons. We observe that in that case the cooperative relay node does not provide any gains for increasing number of users, as the delay for the relay network is larger than the delay without using a relay.
For $\gamma=0.6$, the delay for the network without the relay is much larger, e.g., it starts with $50$ timeslots for $1$ users and goes up to $400$ for $50$ users. In that case, the use of a relay is beneficial in terms of throughput and per-packet delay.

\begin{figure}[h!]
\centering
\subfigure[Average queue length vs. the number of users]{
\includegraphics[scale=0.6]{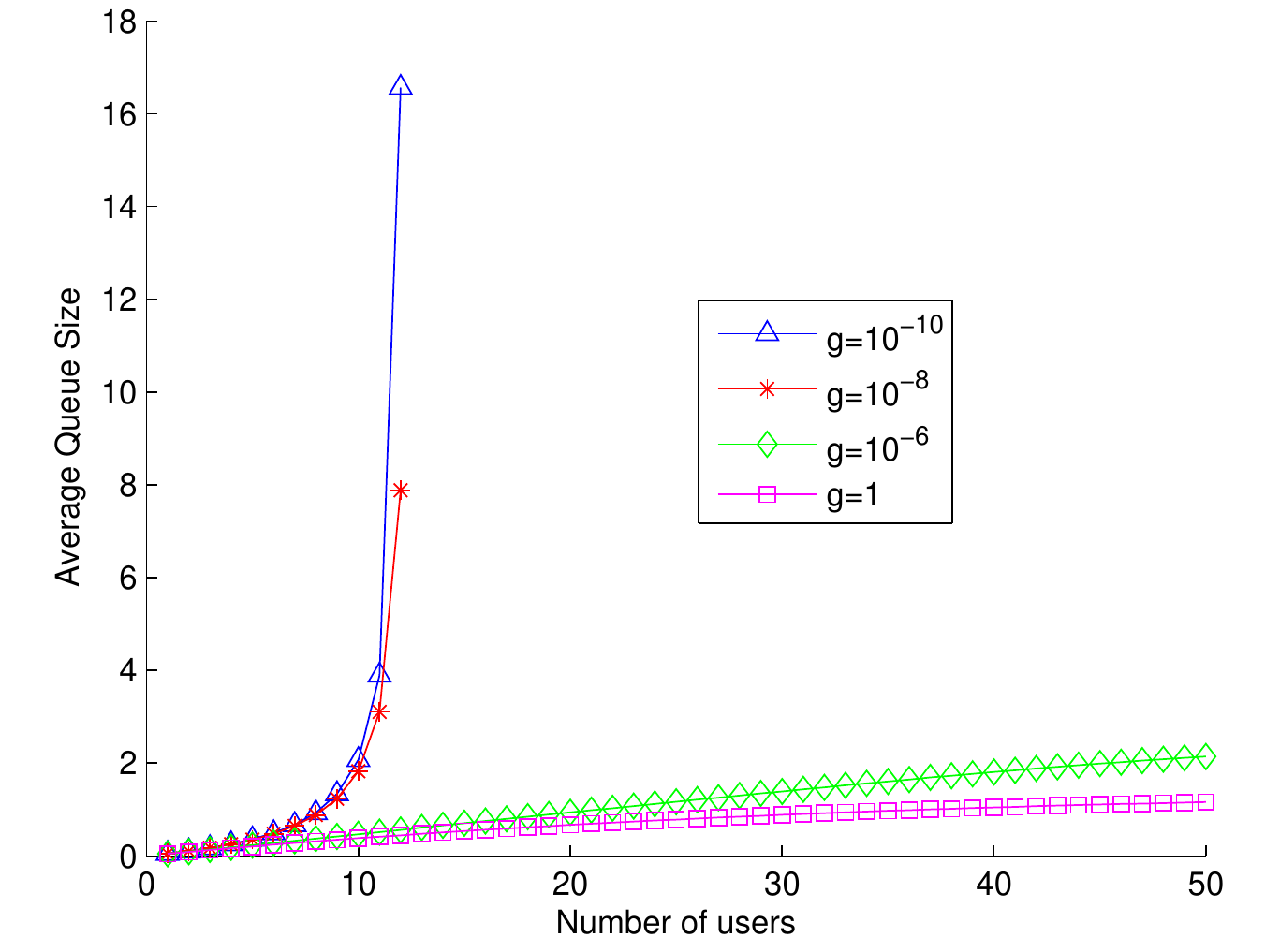}
\label{fig:avQ_n_02}
}
\subfigure[Per-packet delay vs. the number of users]{
\includegraphics[scale=0.6]{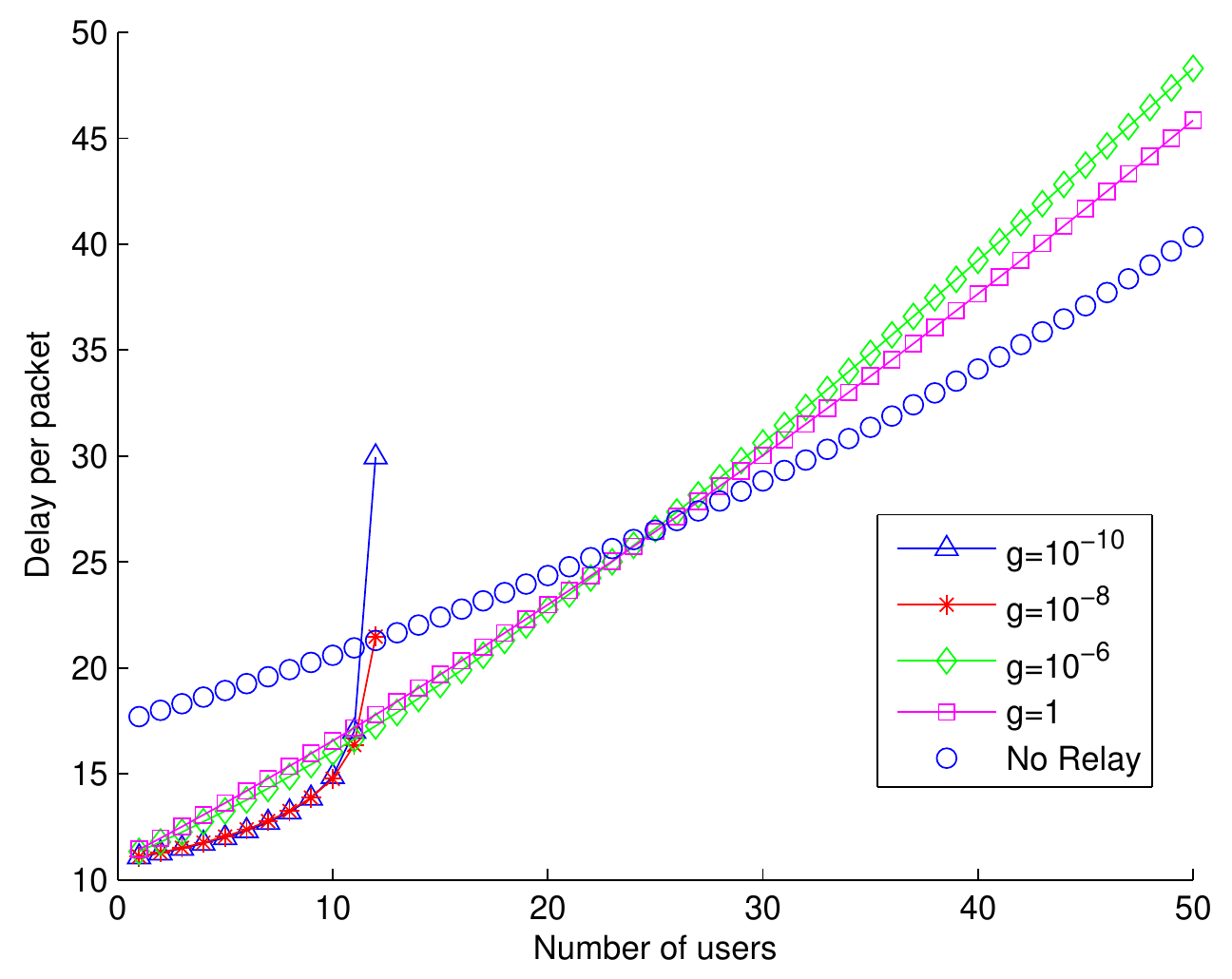}
\label{fig:Delay_n_02}
}
%\label{fig:gn02}
\caption{Average queue length and average per-packet delay vs. the number of users for $\gamma=0.2$, $q=0.1$, and $q_0 = 0.95$.}
\end{figure}

\begin{figure}[h!]
\centering
\subfigure[Average queue length vs. the number of users]{
\includegraphics[scale=0.6]{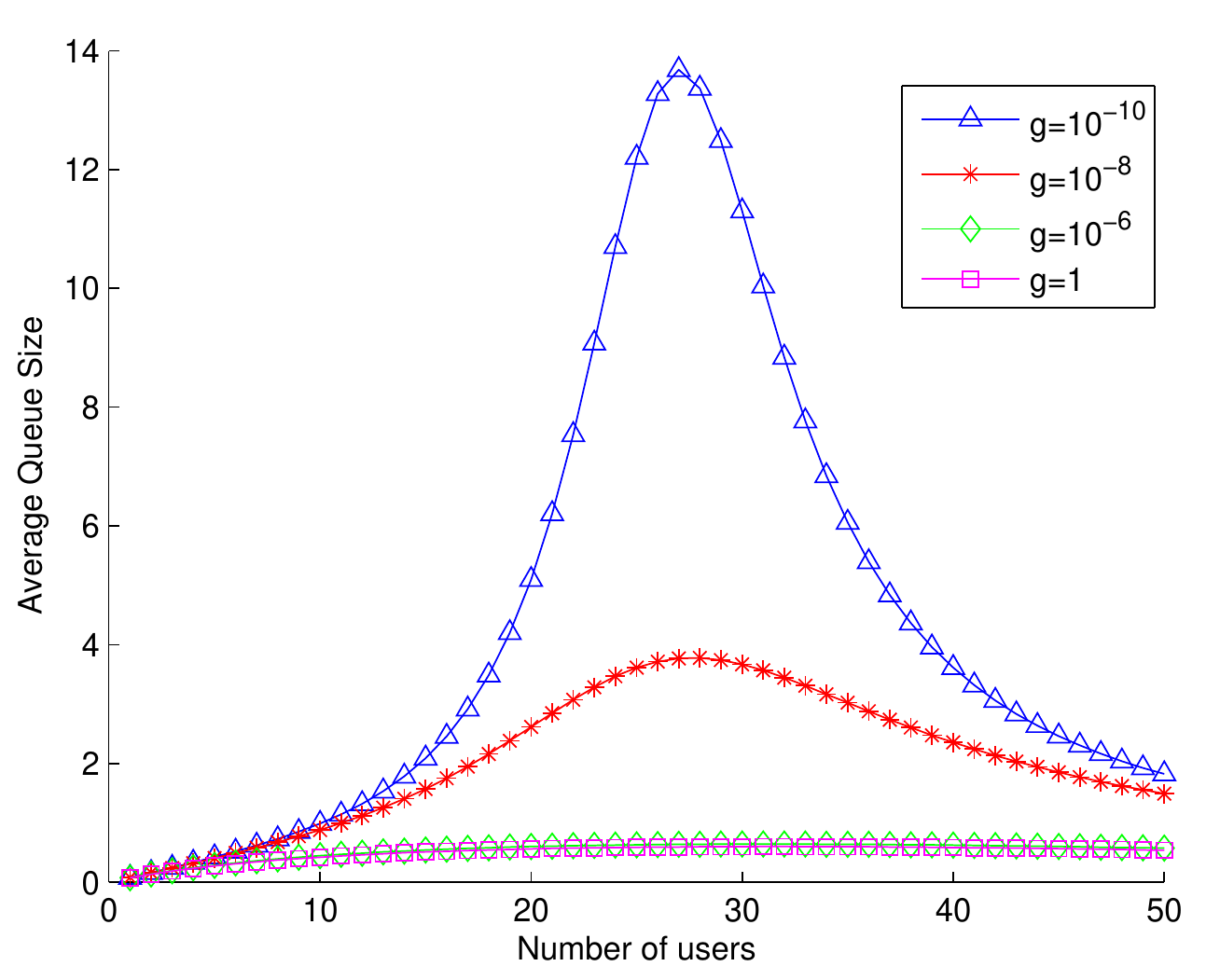}
\label{fig:avQ_n_06}
}
\subfigure[Per-packet delay vs. the number of users]{
\includegraphics[scale=0.6]{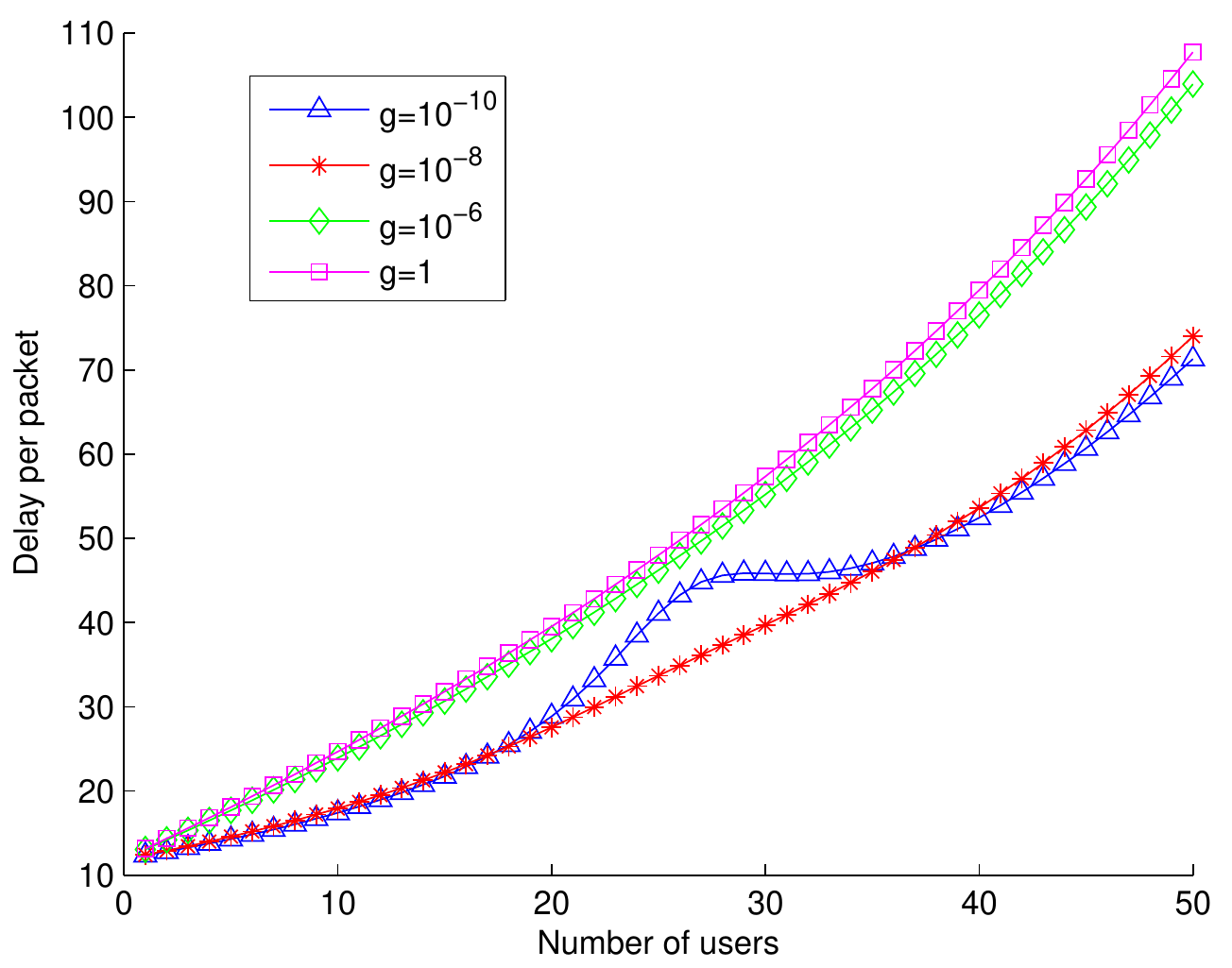}
\label{fig:Delay_n_06}
}
%\label{fig:gn02}
\caption{Average queue length and average per-packet delay vs. the number of users for $\gamma=0.6$, $q=0.1$, and $q_0 = 0.99$.}
\end{figure}

Figs.~\ref{fig:avQ_n_12} and~\ref{fig:avQ_n_25} show the average queue length for $\gamma=1.2$ and $\gamma=2.5$ respectively. In Figs.~\ref{fig:Delay_n_12} and~\ref{fig:Delay_n_25} and~\ref{fig:Delay_n_25}, we illustrate the average delay per packet.

The delay for the network without the relay is significantly large, e.g., for $\gamma=1.2$ the delay is greater than $500$ timeslots and for $\gamma=2.5$ the delay is more than $10000$ timeslots. In those cases, the existence of the relay offers significant gains not only in terms of throughput but also in the delay performance.

When we have almost perfect self-interference cancelation (except the case of $\gamma=0.2$), we observe significant gains in the delay performance compared to the case of the quasi half-duplex relay ($g \to 1$).

\begin{figure}[h!]
\centering
\subfigure[Average queue length vs. the number of users]{
\includegraphics[scale=0.6]{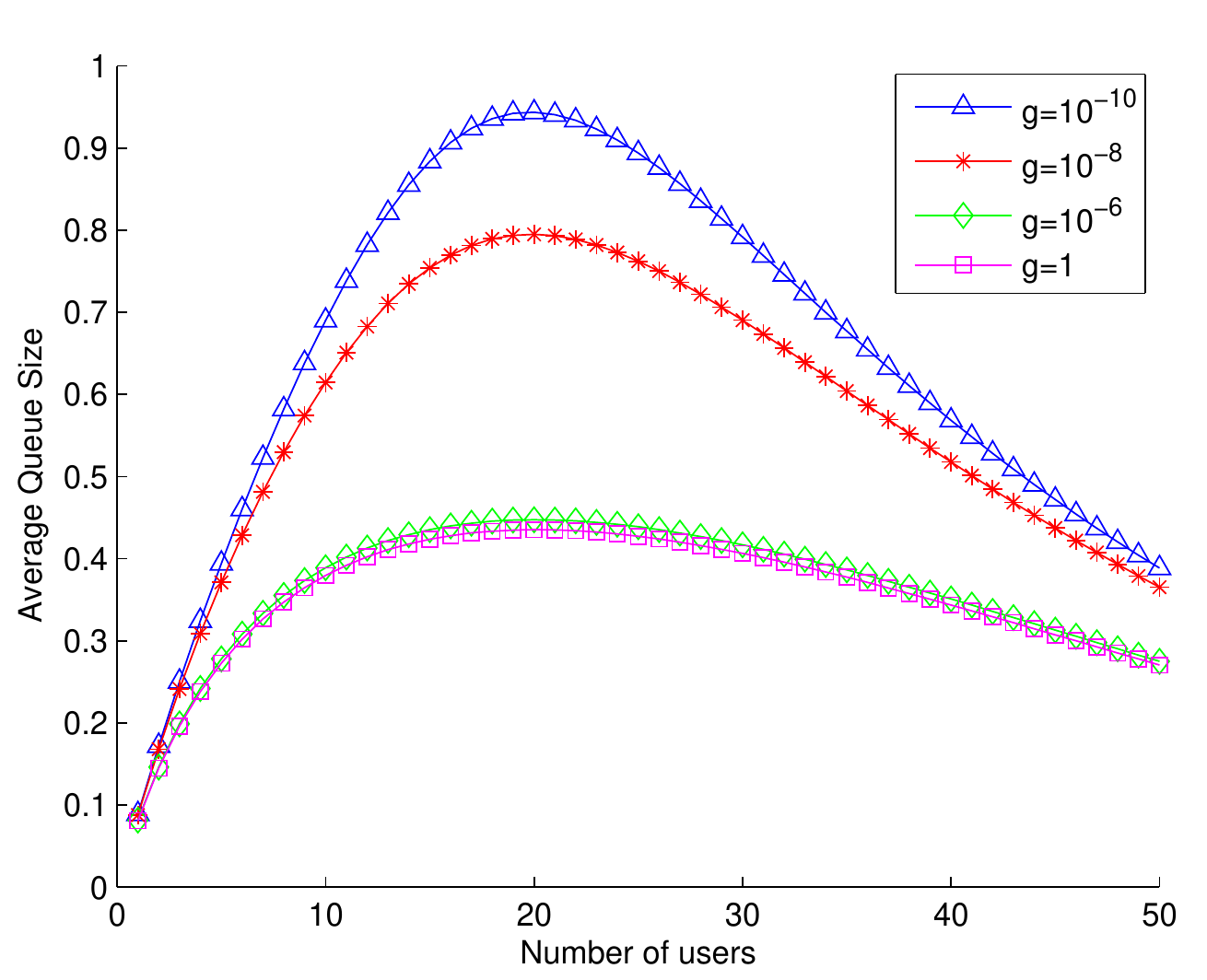}
\label{fig:avQ_n_12}
}
\subfigure[Per-packet delay vs. the number of users]{
\includegraphics[scale=0.6]{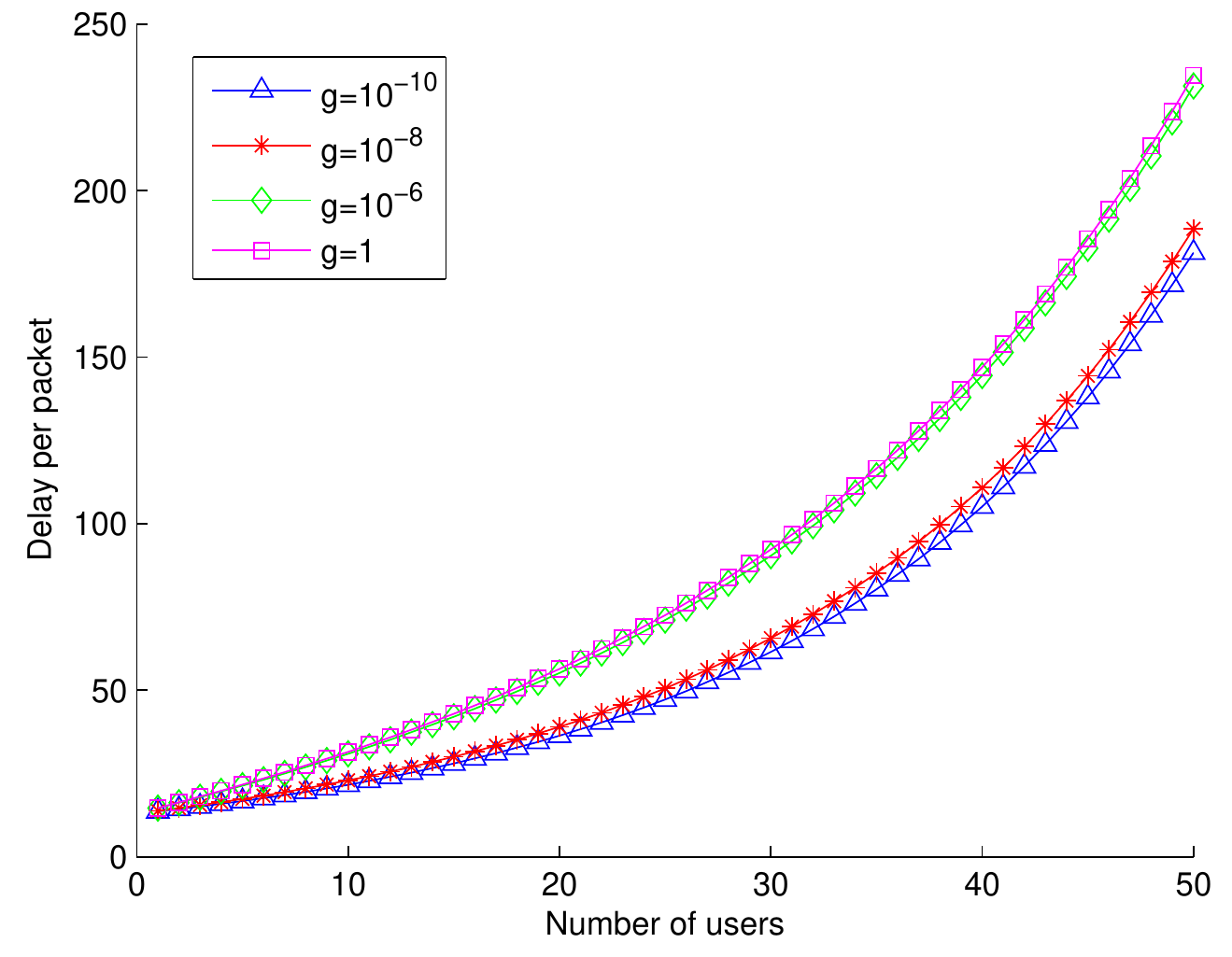}
\label{fig:Delay_n_12}
}
%\label{fig:gn02}
\caption{Average queue length and average per-packet delay vs. the number of users for $\gamma=1.2$, $q=0.1$, and $q_0 = 0.99$.}
\end{figure}

\begin{figure}[h!]
\centering
\subfigure[Average queue length vs. the number of users]{
\includegraphics[scale=0.6]{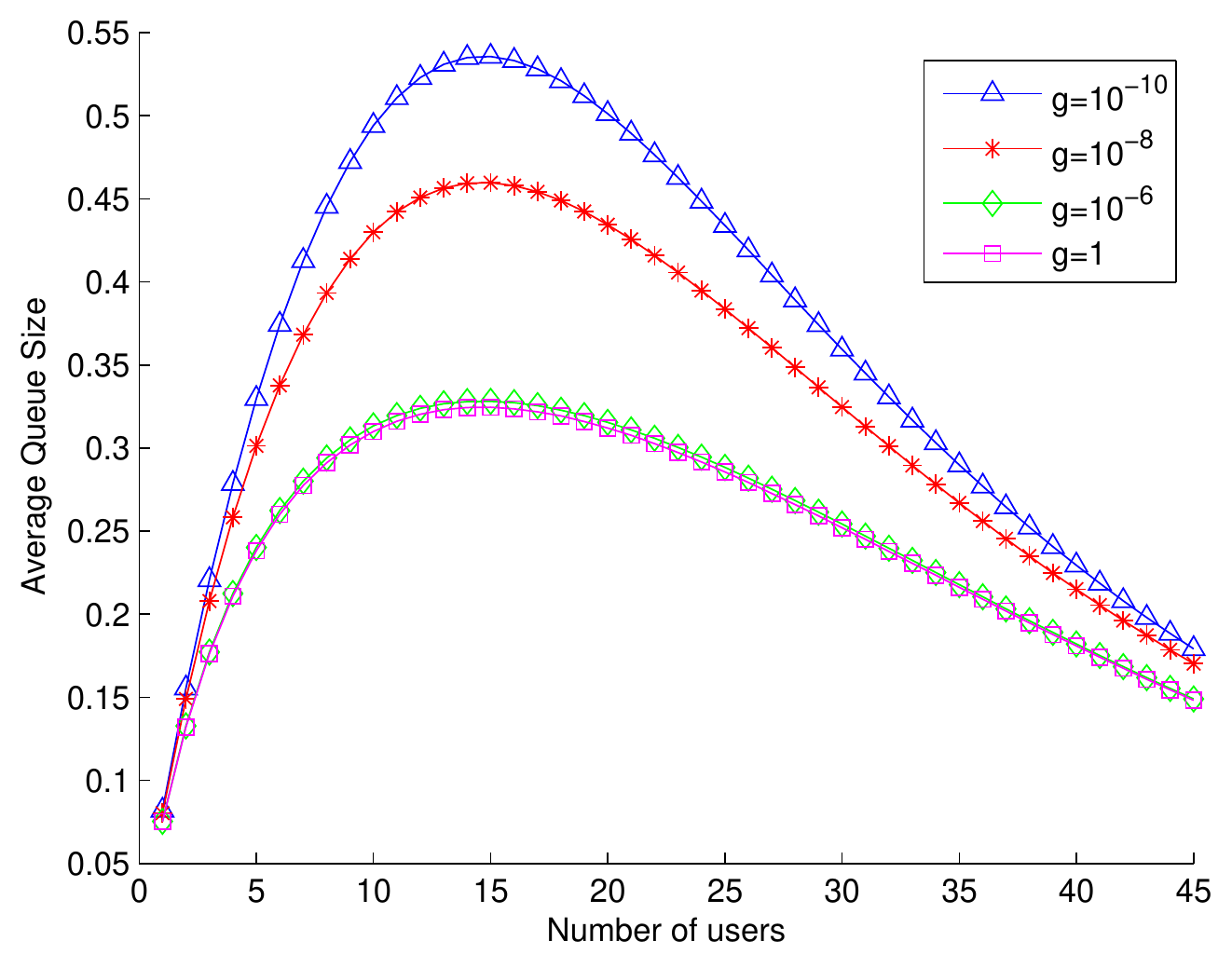}
\label{fig:avQ_n_25}
}
\subfigure[Per-packet delay vs. the number of users]{
\includegraphics[scale=0.6]{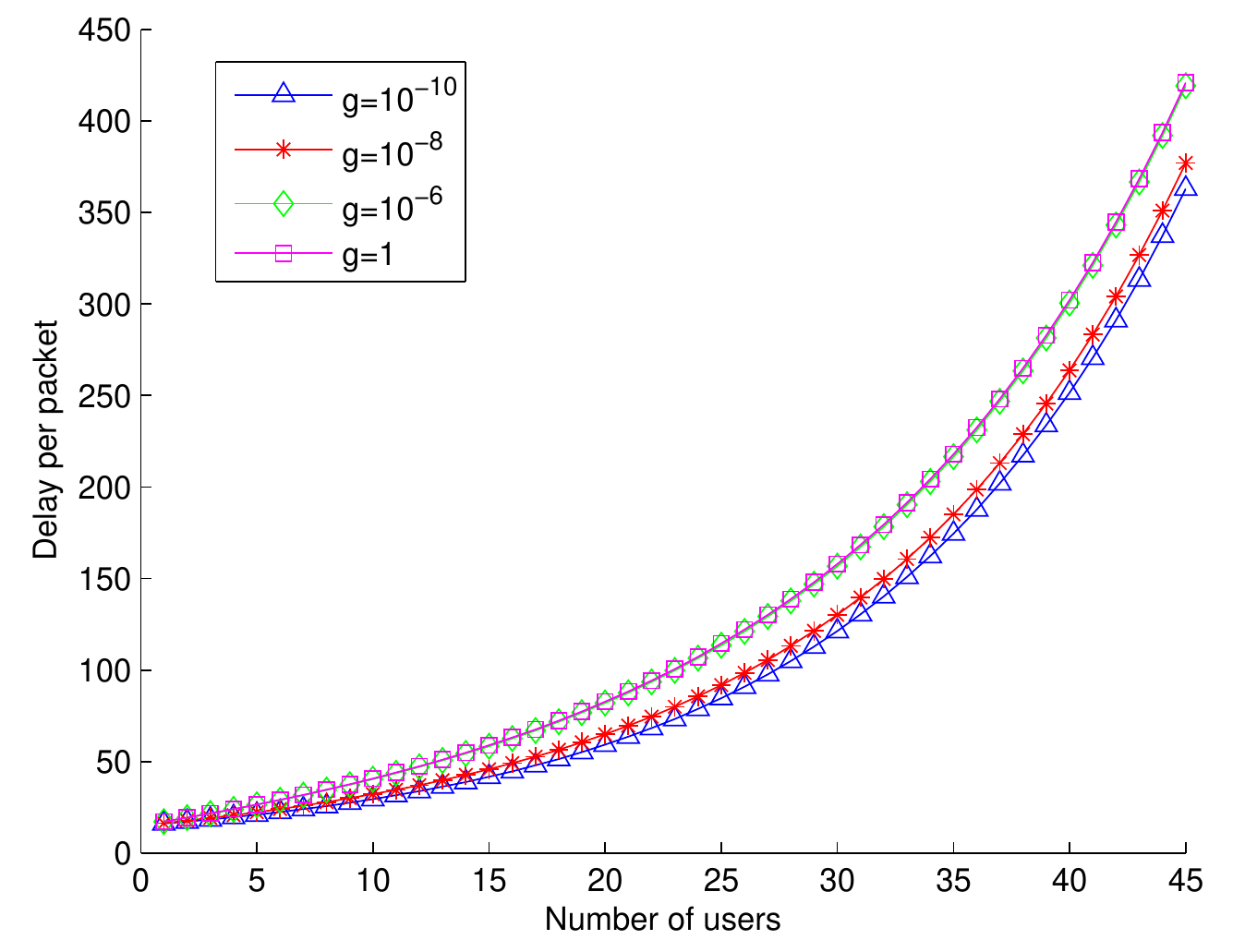}
\label{fig:Delay_n_25}
}
%\label{fig:gn02}
\caption{Average queue length and average per-packet delay vs. the number of users for $\gamma=2.5$, $q=0.1$, and $q_0 = 0.99$.}
\end{figure}

\section{Conclusions}
\label{sec:conclusions}
In this paper, we explored full-duplex communication in which a cooperative node relays packets from a number of sources to a common destination node in a random access network with multi-packet reception capability for both the relay and the destination node. Considering a multiple capture model and the self-interference due to full-duplex relay operation, a transmission is successful if the received SINR is above a certain threshold $\gamma$.

We provided analytical expressions for the performance of the relay queue, namely stability conditions, arrival and service rates, and average queue length. We studied the per-user and the aggregate throughput, and showed that the per-user throughput does not depend on the relay transmit probability under stability conditions. We also studied the impact of the self-interference coefficient $g$ on the per-user throughput, the network-wide throughput, and the average per-packet delay. We showed that the self-interference coefficient plays a crucial role when $\gamma$ is small (and $g$ tends to zero) since it may result in an unstable queue. However, for large $\gamma$ values and perfect self-interference cancelation, the gains in terms of throughput and delay are more pronounced.

Future extensions of this work may include users with non-saturated queues, i.e. sources with external random arrivals, as well as scenarios where the cooperative relay node has packets on its own and different service priorities.

\appendices

\section{Proof of Theorem~\ref{thm:2users}} \label{sec:app_proof_2}

We provide here the proof of Theorem~\ref{thm:2users}, which presents the main result for the relay queue characteristics for the
two-user case.

\underline{\textbf{\emph{Analysis of the average arrival and service rate:}}}

The average service rate $\mu$, is given by (\ref{eq:m2}), where $q_{0}$ is the transmit probability of the relay given that it has packets in its queue, and $q_{i}$ for $i \neq 0$ is the transmit probability for the $i$-th user. The term $P_{i/i,k}^{j}$ is the success probability of link $ij$ when the transmitting nodes are $i$ and $k$ and can be calculated based on (\ref{eq:succprob}).

The average arrival rate $\lambda$ of the queue is given by $\lambda=P\left(Q=0\right)\lambda_{0}+P\left(Q>0\right)\lambda_{1}$,
where $\lambda_{0}$ is the average arrival rate at the relay queue when the queue is empty and $\lambda_{1}$ when it is not.
$\lambda_{0}=r_{1}^{0}+2r_{2}^{0}$, where $r_{i}^{0}$ is the probability of receiving $i$ packets given that the queue is empty.
Accordingly, $\lambda_{1}=r_{1}^{1}+2r_{2}^{1}$, where $r_{i}^{1}$ is the probability of receiving $i$ packets when the queue is not empty.

The expressions for $r_{i}^{0}$ are given by

\begin{equation}
\label{eq:r10}
{\scriptsize
\begin{aligned}
r_{1}^{0}=q_{1}(1-q_{2})(1-P_{1/1}^{d})P_{1/1}^{0}+q_{2}(1-q_{1})(1-P_{2/2}^{d})P_{2/2}^{0}+\\
+q_{1}q_{2}(1-P_{1/1,2}^{d})P_{1/1,2}^{0}P_{2/1,2}^{d}+q_{1}q_{2}(1-P_{2/1,2}^{d})P_{2/1,2}^{0}P_{1/1,2}^{d}+\\+q_{1}q_{2}(1-P_{1/1,2}^{d})P_{1/1,2}^{0}(1-P_{2/1,2}^{d})(1-P_{2/1,2}^{0})+\\
+q_{1}q_{2}(1-P_{2/1,2}^{d})P_{2/1,2}^{0}(1-P_{1/1,2}^{d})(1-P_{1/1,2}^{0}),
\end{aligned}
}
\end{equation}
\begin{equation}
\label{eq:r20}
\begin{aligned}
r_{2}^{0}=q_{1}q_{2}(1-P_{1/1,2}^{d})(1-P_{2/1,2}^{d})P_{1/1,2}^{0}P_{2/1,2}^{0}.
\end{aligned}
\end{equation}
In order to compute for instance $r_{1}^{0}$ (i.e., the relay receives one packet), we have to take into account all the possible combinations, which are either the received packet is transmitted by the first or the second user (with all the possible combinations of active/idle users). When the relay queue is not empty, the expressions for the $r_{i}^{1}$ are given by

\begin{equation}
{\scriptsize
\label{eq:r11}
\begin{aligned}
r_{1}^{1}=(1-q_{0})q_{1}(1-q_{2})(1-P_{1/1}^{d})P_{1/1}^{0}+q_{0}q_{1}(1-q_{2})(1-P_{1/0,1}^{d})P_{1/0,1}^{0}+\\
+(1-q_{0})q_{2}(1-q_{1})(1-P_{2/2}^{d})P_{2/2}^{0}+q_{0}q_{2}(1-q_{1})(1-P_{2/0,2}^{d})P_{2/0,2}^{0}+\\
+(1-q_{0})q_{1}q_{2}(1-P_{1/1,2}^{d})P_{1/1,2}^{0}(1-P_{2/1,2}^{d})(1-P_{2/1,2}^{0})+\\
+q_{0}q_{1}q_{2}(1-P_{1/0,1,2}^{d})P_{1/0,1,2}^{0}(1-P_{2/0,1,2}^{d})(1-P_{2/0,1,2}^{0})+\\
+(1-q_{0})q_{1}q_{2}(1-P_{1/1,2}^{d})P_{1/1,2}^{0}P_{2/1,2}^{d}+\\
+q_{0}q_{1}q_{2}(1-P_{1/0,1,2}^{d})P_{1/0,1,2}^{0}P_{2/0,1,2}^{d}+\\+(1-q_{0})q_{1}q_{2}(1-P_{2/1,2}^{d})P_{2/1,2}^{0}(1-P_{1/1,2}^{d})(1-P_{1/1,2}^{0})+\\
+q_{0}q_{1}q_{2}(1-P_{2/0,1,2}^{d})P_{2/0,1,2}^{0}(1-P_{1/0,1,2}^{d})(1-P_{1/0,1,2}^{0})+\\
+(1-q_{0})q_{1}q_{2}(1-P_{2/1,2}^{d})P_{2/1,2}^{0}P_{1/1,2}^{d}+\\+q_{0}q_{1}q_{2}(1-P_{2/0,1,2}^{d})P_{2/0,1,2}^{0}P_{1/0,1,2}^{d},
\end{aligned}
}
\end{equation}

\begin{equation}
\label{eq:r21}
\begin{aligned}
r_{2}^{1}=(1-q_{0})q_{1}q_{2}(1-P_{1/1,2}^{d})P_{1/1,2}^{0}(1-P_{2/1,2}^{d})P_{2/1,2}^{0}+\\
+q_{0}q_{1}q_{2}(1-P_{1/0,1,2}^{d})P_{1/0,1,2}^{0}(1-P_{2/0,1,2}^{d})P_{2/0,1,2}^{0}.
\end{aligned}
\end{equation}

\begin{figure}[h!]
\centering
\includegraphics[scale=0.43]{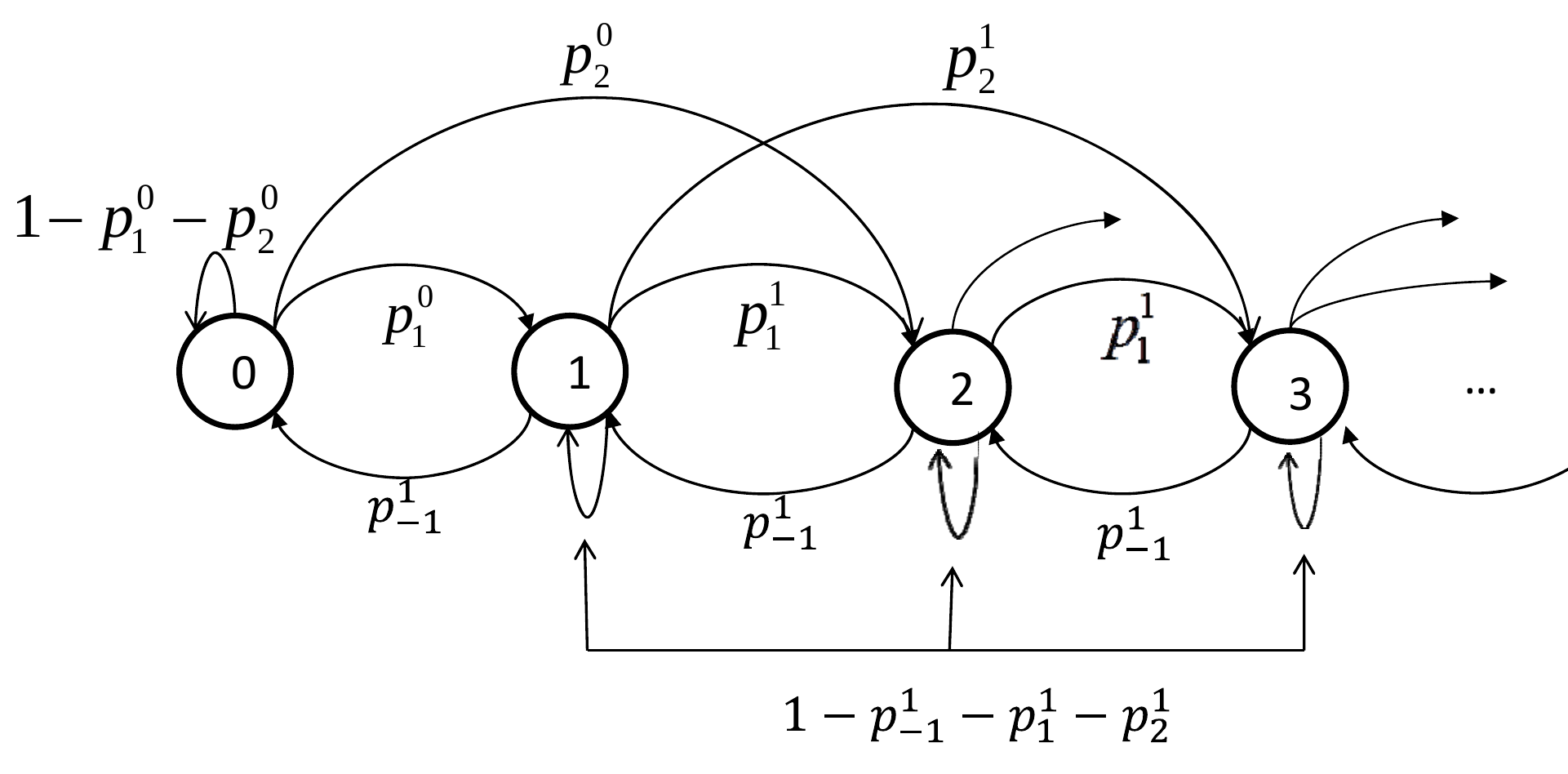}
\caption{The Markov Chain model for the two-user case.}
\label{fig:mc_rxtx}
\end{figure}

In order to fully characterize the average arrival rate at the relay, we have to compute the probability the queue is empty. We model the queue at the relay as a discrete time Markov Chain (DTMC), which describes the queue evolution and is presented in Fig.~\ref{fig:mc_rxtx}. Each state is denoted by an integer and represents the queue size at the relay node. The transition matrix of the above DTMC is a lower Hessenberg matrix given by
\begin{equation}
P=\left(\begin{array}{ccccc}
a_{0} & b_{0} & 0  & 0    & \cdots   \\
a_{1} & b_{1} & b_{0} & 0 & \cdots    \\
a_{2} & b_{2} & b_{1} & b_{0} & \cdots \\
0 & b_{3} & b_{2} & b_{1} & \cdots \\
0 & 0 & b_{3} & b_{2} & \cdots\\
\vdots & \vdots & \vdots & \vdots & \ddots
\end{array} \right) .
\end{equation}

The elements of the matrix P, are $a_{0}=1-p_{1}^{0}-p_{2}^{0},a_{1}=p_{1}^{0},a_{2}=p_{2}^{0}$, $b_{0}=p_{-1}^{1}$ and $b_{i+1}=p_{i}^{1} \text{ }  i=0,1,2,3$. The quantity $p_{i}^{0}$ ($p_{i}^{1}$) is the probability that the queue size increases by $i$ packets when the queue is empty (not empty). Note that $p_{i}^{0}=r_{i}^{0}$ because when the queue is empty, the probability of $i$ packets arriving is the same with the probability that the queue size increases by $i$ packets; when the queue is not empty however, this is not true. For example the probability of two packets arriving is not the same with the probability of doubling the queue size; this is because both arrivals and departures can occur at the same time. The expressions for the $p_{i}^{j}$ are given by

\begin{equation}
{\scriptsize
\label{eq:p-11}
\begin{aligned}
p_{-1}^{1}=q_{0}(1-q_{1})(1-q_{2})P_{0/0}^{d}+q_{0}(1-q_{1})q_{2}P_{0/0,2}^{d}P_{2/0,2}^{d}+\\
+q_{0}(1-q_{1})q_{2}P_{0/0,2}^{d}(1-P_{2/0,2}^{d})(1-P_{2/0,2}^{0})+q_{0}q_{1}(1-q_{2})P_{0/0,1}^{d}P_{1/0,1}^{d}+\\
+q_{0}q_{1}(1-q_{2})P_{0/0,1}^{d}(1-P_{1/0,1}^{d})(1-P_{1/0,1}^{0})+\\+q_{0}q_{1}q_{2}P_{0/0,1,2}^{d}P_{1/0,1,2}^{d}P_{2/0,1,2}^{d}+\\
+q_{0}q_{1}q_{2}P_{0/0,1,2}^{d}(1-P_{1/0,1,2}^{d})(1-P_{1/0,1,2}^{0})(1-P_{2/0,1,2}^{d})(1-P_{2/0,1,2}^{0})+\\
+q_{0}q_{1}q_{2}P_{0/0,1,2}^{d}P_{1/0,1,2}^{d}(1-P_{2/0,1,2}^{d})(1-P_{2/0,1,2}^{0})+\\
+q_{0}q_{1}q_{2}P_{0/0,1,2}^{d}(1-P_{1/0,1,2}^{d})(1-P_{1/0,1,2}^{0})P_{2/0,1,2}^{d},
\end{aligned}
}
\end{equation}
\begin{equation}
\label{eq:p01}
\begin{aligned}
p_{0}^{1}=1-p_{-1}^{1}-p_{1}^{1}-p_{2}^{1},
\end{aligned}
\end{equation}

\begin{equation}
{\scriptsize
\label{eq:p11}
\begin{aligned}
p_{1}^{1}=(1-q_{0})q_{1}(1-q_{2})(1-P_{1/1}^{d})P_{1/1}^{0}+(1-q_{0})q_{1}q_{2}(1-P_{1/1,2}^{d})P_{1/1,2}^{0}P_{2/1,2}^{d}+\\
+(1-q_{0})q_{1}q_{2}(1-P_{1/1,2}^{d})P_{1/1,2}^{0}(1-P_{2/1,2}^{d})(1-P_{2/1,2}^{0})+\\
+(1-q_{0})(1-q_{1})q_{2}(1-P_{2/2}^{d})P_{2/2}^{0}+(1-q_{0})q_{1}q_{2}(1-P_{2/1,2}^{d})P_{2/1,2}^{0}P_{1/1,2}^{d}+\\
+(1-q_{0})q_{1}q_{2}(1-P_{2/1,2}^{d})P_{2/1,2}^{0}(1-P_{1/1,2}^{d})(1-P_{1/1,2}^{0})+\\
+q_{0}q_{1}q_{2}P_{0/0,1,2}^{d}(1-P_{1/0,1,2}^{d})P_{1/0,1,2}^{0}(1-P_{2/0,1,2}^{d})P_{2/0,1,2}^{0}+\\
+q_{0}q_{1}(1-q_{2})(1-P_{0/0,1}^{d})(1-P_{1/0,1}^{d})P_{1/0,1}^{0}+\\
+q_{0}q_{1}q_{2}(1-P_{0/0,1,2}^{d})(1-P_{1/0,1,2}^{d})P_{1/0,1,2}^{0}P_{2/0,1,2}^{d}+\\
+q_{0}q_{1}q_{2}(1-P_{0/0,1,2}^{d})(1-P_{1/0,1,2}^{d})P_{1/0,1,2}^{0}(1-P_{2/0,1,2}^{d})(1-P_{2/0,1,2}^{0})+\\
+q_{0}q_{2}(1-q_{1})(1-P_{0/0,2}^{d})(1-P_{2/0,2}^{d})P_{2/0,2}^{0}+\\
+q_{0}q_{1}q_{2}(1-P_{0/0,1,2}^{d})(1-P_{2/0,1,2}^{d})P_{2/0,1,2}^{0}P_{1/0,1,2}^{d}+\\
+q_{0}q_{1}q_{2}(1-P_{0/0,1,2}^{d})(1-P_{2/0,1,2}^{d})P_{2/0,1,2}^{0}(1-P_{1/0,1,2}^{d})(1-P_{1/0,1,2}^{0}),
\end{aligned}
}
\end{equation}
\begin{equation}
\label{eq:p21}
\begin{aligned}
p_{2}^{1}=(1-q_{0})q_{1}q_{2}(1-P_{1/1,2}^{d})P_{1/1,2}^{0}(1-P_{2/1,2}^{d})P_{2/1,2}^{0}+\\
+q_{0}q_{1}q_{2}(1-P_{0/0,1,2}^{d})(1-P_{1/0,1,2}^{d})P_{1/0,1,2}^{0}(1-P_{2/0,1,2}^{d})P_{2/0,1,2}^{0}.
\end{aligned}
\end{equation}
Note that $p_{i}^{j}$ and $r_{i}^{j}$ are in general different quantities, however in half-duplex relay systems we have $p_{i}^{j}=r_{i}^{j}$.

The difference equations that govern the evolution of the states are given by
\begin{equation}
Ps=s \Rightarrow s_{i}=a_{i}s_{0}+\sum_{j=1}^{i+1}{b_{i-j+1}s_{j}}.
\end{equation}
We apply the z-transform technique to compute the steady state distribution, i.e., we let
\begin{equation}
A(z)=\sum_{i=0}^{2}{a_{i}z^{-i}}, B(z)=\sum_{i=0}^{3}{b_{i}z^{-i}}, S(z)=\sum_{i=0}^{\infty}{s_{i}z^{-i}}.
\end{equation}
We know that~\cite{b:Gebali}
\begin{equation}
S(z)=s_{0}\frac{z^{-1}A(z)-B(z)}{z^{-1}-B(z)}.
\end{equation}
It is also known that the probability of the queue in the relay is empty is given by~\cite{b:Gebali}
\begin{equation}
\label{eq:P(Q=0)}
P\left(Q=0\right)=\frac{1+B^{'}(1)}{1+B^{'}(1)-A^{'}(1)}.
\end{equation}
The expressions of $A^{'}(1)$ and $B^{'}(1)$ are:
\begin{equation}
\label{eq:A'(1)}
\begin{aligned}
A^{'}(z)=\left(\sum_{i=0}^{2}{a_{i}z^{-i}} \right)^{'}=-\sum_{i=1}^{2}{ia_{i}z^{-(i+1)}} \\
\Rightarrow A^{'}(1)=-\sum_{i=1}^{2}{ia_{i}}\Rightarrow A^{'}(1)=-\sum_{i=1}^{2}{ip_{i}^{0}}=-\lambda_{0},
\end{aligned}
\end{equation}
\begin{equation}
\label{eq:B'(1)}
\begin{aligned}
B^{'}(z)=\left(\sum_{i=0}^{3}{b_{i}z^{-i}} \right)^{'}=-\sum_{i=0}^{3}{ib_{i}z^{-(i+1)}} \\ \Rightarrow B^{'}(1)=-\sum_{i=0}^{3}{ib_{i}}=-1+p_{-1}^{1}-p_{1}^{1}-2p_{2}^{1}.
\end{aligned}
\end{equation}
Then, the probability of the queue in the relay is empty is given by (\ref{eq:probempty2}).
Therefore, the average arrival rate $\lambda$ is given by (\ref{eq:lambda2}).

\underline{\textbf{\emph{Average Queue Length:}}}
The average queue length is known to be $\overline{Q}=-S^{'}(1)$, where $S^{'}(1)=s_{0}\frac{K^{''}(1)}{L^{''}(1)}$~\cite{b:Gebali}.
The expressions for $K(z)$ and $L(z)$ are given by
\begin{equation}
\label{eq:K(z)}
\begin{aligned}
K(z)=\left(-z^{-2}A(z)+z^{-1}A^{'}(z)-B^{'}(z) \right) \left(z^{-1}-B(z)\right) - \\ - \left(z^{-1}A(z)-B(z) \right) \left(-z^{-2}-B^{'}(z) \right),
\end{aligned}
\end{equation}
\begin{equation}
\label{eq:L(z)}
\begin{aligned}
L(z)=\left(z^{-1}-B(z) \right)^{2}.
\end{aligned}
\end{equation}
Then $K^{''}(1)$ and $L^{''}(1)$ are given by
\begin{equation}
\label{eq:K''(1)}
{\scriptsize
\begin{aligned}
K^{''}(1)=\left(2A(1)-2A^{'}(1)+A^{''}(1)-B^{''}(1) \right) \left(-1-B^{'}(1) \right)-\\ - \left(2-B^{''}(1) \right) \left(-A(1)+A^{'}(1)-B^{'}(1) \right),
\end{aligned}
}
\end{equation}
\begin{equation}
\label{eq:L''(1)}
\begin{aligned}
L^{''}(z)=\left[2\left(z^{-1}-B(z) \right) \left(-z^{-2}-B^{'}(z) \right) \right]^{'} \\ \Rightarrow L^{''}(1)=2\left(-1-B^{'}(1)\right)^{2}.
\end{aligned}
\end{equation}
The values of $A^{''}(1)$ and $B^{''}(1)$ are:
\begin{equation}
\label{eq:A''(1)}
\begin{aligned}
A^{''}(z)=\left(-\sum_{i=1}^{2}{ia_{i}z^{-(i+1)}} \right)^{'}=\sum_{i=1}^{2}{i(i+1)a_{i}z^{-(i+2)}} \\ \Rightarrow A^{''}(1)=2p_{1}^{0}+6p_{2}^{0},
\end{aligned}
\end{equation}
\begin{equation}
\label{eq:B''(1)}
\begin{aligned}
B^{''}(z)=\left(-\sum_{i=1}^{3}{ib_{i}z^{-(i+1)}} \right)^{'}=\sum_{i=1}^{3}{i(i+1)b_{i}z^{-(i+2)}} \\ \Rightarrow B^{''}(1)=2-2p_{-1}^{1}+4p_{1}^{1}+10p_{2}^{1}.
\end{aligned}
\end{equation}
The average queue length is given by (\ref{eq:avQ2}).

\underline{\textbf{\emph{Condition for the stability of the queue:}}}
An important tool to determine stability is Loyne's criterion~\cite{b:Loynes}, which states that if the arrival and service processes of a queue are jointly strictly stationary and ergodic, the queue is stable if and only if the average arrival rate is strictly less than the average service rate. If the queue is stable, the departure rate (throughput) is equal to the arrival rate, i.e.,
$\lambda_{1}<\mu\Leftrightarrow r_{1}^{1}+2r_{2}^{1}<\mu$ where $r_{1}^{1}=(1-q_{0})A_{1}+q_{0}B_{1}$, $r_{2}^{1}=(1-q_{0})A_{2}+q_{0}B_{2}$ and $\mu=q_{0}A$.

The expressions for $A,A_{i},B_{i}$ are given by

\begin{equation*}
\label{eq:A1B1}
\begin{aligned}
A_{1}=q_{1}(1-q_{2})(1-P_{1/1}^{d})P_{1/1}^{0}+q_{2}(1-q_{1})(1-P_{2/2}^{d})P_{2/2}^{0}+\\
+q_{1}q_{2}(1-P_{1/1,2}^{d})P_{1/1,2}^{0}(1-P_{2/1,2}^{d})(1-P_{2/1,2}^{0})+\\+q_{1}q_{2}(1-P_{1/1,2}^{d})P_{1/1,2}^{0}P_{2/1,2}^{d}+\\
+q_{1}q_{2}(1-P_{2/1,2}^{d})P_{2/1,2}^{0}(1-P_{1/1,2}^{d})(1-P_{1/1,2}^{0})+\\+q_{1}q_{2}(1-P_{2/1,2}^{d})P_{2/1,2}^{0}P_{1/1,2}^{d},
\end{aligned}
\end{equation*}
\begin{equation*}
{\scriptsize
\begin{aligned}
B_{1}=q_{1}(1-q_{2})(1-P_{1/0,1}^{d})P_{1/0,1}^{0}+q_{2}(1-q_{1})(1-P_{2/0,2}^{d})P_{2/0,2}^{0}+\\
+q_{1}q_{2}(1-P_{1/0,1,2}^{d})P_{1/0,1,2}^{0}(1-P_{2/0,1,2}^{d})(1-P_{2/0,1,2}^{0})+\\+q_{1}q_{2}(1-P_{1/0,1,2}^{d})P_{1/0,1,2}^{0}P_{2/0,1,2}^{d}+\\
+q_{1}q_{2}(1-P_{2/0,1,2}^{d})P_{2/0,1,2}^{0}(1-P_{1/0,1,2}^{d})(1-P_{1/0,1,2}^{0})+\\+q_{1}q_{2}(1-P_{2/0,1,2}^{d})P_{2/0,1,2}^{0}P_{1/0,1,2}^{d},
\end{aligned}
}
\end{equation*}
\begin{equation*}
\label{eq:A2B2}
\begin{aligned}
A_{2}=q_{1}q_{2}(1-P_{1/1,2}^{d})P_{1/1,2}^{0}(1-P_{2/1,2}^{d})P_{2/1,2}^{0},\\ B_{2}=q_{1}q_{2}(1-P_{1/0,1,2}^{d})P_{1/0,1,2}^{0}(1-P_{2/0,1,2}^{d})P_{2/0,1,2}^{0},
\end{aligned}
\end{equation*}
\begin{equation*}
\label{eq:A}
%{\scriptsize
\begin{aligned}
A=(1-q_{1})(1-q_{2})P_{0/0}^{d}+q_{1}(1-q_{2})P_{0/0,1}^{d}+\\+q_{2}(1-q_{1})P_{0/0,2}^{d}+q_{1}q_{2}P_{0/0,1,2}^{d}.
\end{aligned}
%}
\end{equation*}
Then the values of $q_{0}$ for which the queue is stable are given by $q_{0min}<q_{0}<1$, where
\begin{equation}
\label{eq:q0min2}
q_{0min}=\frac{A_{1}+2A_{2}}{A+A_{1}+2A_{2}-B_{1}-2B_{2}}.
\end{equation}

\section{Proof of Theorem~\ref{thm:nusers}} \label{sec:app_proof_n}

In this appendix, we provide the proof of Theorem~\ref{thm:nusers}, which presents the relay's queue characteristics for the symmetric $n$-user case.

\underline{\textbf{\emph{Computation of the average arrival and service rate:}}}
The service rate is given by (\ref{eq:mun}).
The average arrival rate $\lambda$ of the queue is $\lambda=P\left(Q=0\right)\lambda_{0}+P\left(Q>0\right)\lambda_{1}$,
with $\lambda_{0}=\sum_{k=1}^{n}{kr_{k}^{0}}$, where the $r_{k}^{0}$ is the probability that the relay received $k$ packets when the queue is empty, given by
\begin{equation}
\label{eq:rk0n}
\begin{aligned}
r_{k}^{0}=\sum_{i=k}^{n}{{n \choose i}{i \choose k} {q^{i}(1-q)^{n-i}}P_{0,i,0}^{k}} \times \\ \times  \left(1-P_{d,i,0}\right)^{k}\left[1-P_{0,i,0}(1-P_{d,i,0})\right]^{i-k},\text{ }1 \leq k \leq n.
\end{aligned}
\end{equation}
$\lambda_{1}=\sum_{k=1}^{n}{kr_{k}^{1}}$, where the $r_{k}^{1}$ is the probability that the relay received $k$ packets when the queue is not empty and is given by
\begin{equation}
\label{eq:rk1n}
\begin{aligned}
r_{k}^{1}=(1-q_{0})\sum_{i=k}^{n}{{n \choose i}{i \choose k} {q^{i}(1-q)^{n-i}}P_{0,i,0}^{k}} \times \\ \times \left(1-P_{d,i,0}\right)^{k}\left[1-P_{0,i,0}(1-P_{d,i,0})\right]^{i-k}+\\
+q_{0}\sum_{i=k}^{n}{{n \choose i}{i \choose k} {q^{i}(1-q)^{n-i}}P_{0,i,1}^{k}\left(1-P_{d,i,1}\right)^{k}} \times \\ \times \left[1-P_{0,i,1}(1-P_{d,i,1})\right]^{i-k},\text{ }1 \leq k \leq n.
\end{aligned}
\end{equation}
The elements of the transition matrix are given by
$a_{k}=p_{k}^{0}$, $b_{0}=p_{-1}^{1}$, $b_{1}=p_{0}^{1}$ and $b_{k+1}=p_{k}^{1} \text{ } \forall k>0$ where

\begin{equation}
\label{eq:pk0n}
\begin{aligned}
p_{k}^{0}=\sum_{i=k}^{n}{{n \choose i}{i \choose k} {q^{i}(1-q)^{n-i}}P_{0,i,0}^{k}\left(1-P_{d,i,0}\right)^{k}} \times \\
\times \left[1-P_{0,i,0}(1-P_{d,i,0})\right]^{i-k},\text{ }1 \leq k \leq n,
\end{aligned}
\end{equation}
\begin{equation}
\label{eq:p-11n}
p_{-1}^{1}=q_{0}\sum_{k=0}^{n}{{n \choose k}q^{k}(1-q)^{n-k}P_{0d,k}\left[1-P_{0,k,1}(1-P_{d,k,1})\right]^{k}},
\end{equation}
\begin{equation}
\label{eq:pk1n}
\begin{aligned}
p_{k}^{1}=(1-q_{0})\sum_{i=k}^{n}{{n \choose i}{i \choose k} {q^{i}(1-q)^{n-i}}P_{0,i,0}^{k}} \times \\
\times  \left(1-P_{d,i,0}\right)^{k}\left[1-P_{0,i,0}(1-P_{d,i,0})\right]^{i-k}+\\
+q_{0}\sum_{i=k}^{n}{{n \choose i}{i \choose k} {q^{i}(1-q)^{n-i}}(1-P_{0d,i})P_{0,i,1}^{k}} \times \\
\times \left(1-P_{d,i,1}\right)^{k}\left[1-P_{0,i,1}(1-P_{d,i,1})\right]^{i-k}+\\
+q_{0}\sum_{i=k+1}^{n}{{n \choose i}{i \choose k+1} {q^{i}(1-q)^{n-i}}P_{0d,i}P_{0,i,1}^{k+1}} \times \\
\times \left(1-P_{d,i,1}\right)^{k+1}\left[1-P_{0,i,1}(1-P_{d,i,1})\right]^{i-k-1},
\end{aligned}
\end{equation}
\begin{equation}
\label{eq:p01n}
p_{0}^{1}=1-p_{-1}^{1}-\sum_{i=1}^{n}{p_{i}^{1}}.
\end{equation}
The probability that the queue in the relay is empty is given by (\ref{eq:P(Q=0)}), where the expressions for $A^{'}(1)$ and $B^{'}(1)$ are
\begin{equation}
\label{eq:A'(1)}
\begin{aligned}
A^{'}(z)=\left(\sum_{i=0}^{n}{a_{i}z^{-i}} \right)^{'}=-\sum_{i=1}^{n}{ia_{i}z^{-(i+1)}} \\
\Rightarrow A^{'}(1)=-\sum_{i=1}^{n}{ia_{i}}\Rightarrow A^{'}(1)=-\sum_{i=1}^{n}{ip_{i}^{0}}=-\lambda_{0},
\end{aligned}
\end{equation}
\begin{equation}
\label{eq:B'(1)}
\begin{aligned}
B^{'}(z)=\left(\sum_{i=0}^{n+1}{b_{i}z^{-i}} \right)^{'}=-\sum_{i=i}^{n+1}{ib_{i}z^{-(i+1)}} \\ \Rightarrow B^{'}(1)=-\sum_{i=i}^{n+1}{ib_{i}}=-b_{1}-\sum_{i=2}^{n+1}{ib_{i}}=-1+p_{-1}^{1}-\sum_{i=1}^{n}{ip_{i}^{1}}.
\end{aligned}
\end{equation}
Then the probability that the queue in the relay is empty is given by (\ref{eq:probemptyn}).

\underline{\textbf{\emph{Average Queue Length:}}}
As we showed in Appendix~\ref{sec:app_proof_2}, the average queue length is given by $\overline{Q}=-S^{'}(1)$, where $S^{'}(1)=s_{0}\frac{K^{''}(1)}{L^{''}(1)}$.
The expressions for $K^{''}(1)$ and $L^{''}(1)$ are given by (\ref{eq:K''(1)}) and (\ref{eq:L''(1)}).
The expressions for $A^{''}(1)$ and $B^{''}(1)$ are
\begin{equation}
\label{eq:A''(1)}
\begin{aligned}
A^{''}(z)=\left(-\sum_{i=1}^{n}{ia_{i}z^{-(i+1)}} \right)^{'}=\sum_{i=1}^{n}{i(i+1)a_{i}z^{-(i+2)}} \\ \Rightarrow A^{''}(1)=\sum_{i=1}^{n}{i(i+1)a_{i}}=\sum_{i=1}^{n}{i(i+1)p_{i}^{0}},
\end{aligned}
\end{equation}
\begin{equation}
\label{eq:B''(1)}
\begin{aligned}
B^{''}(z)=\left(-\sum_{i=i}^{n+1}{ib_{i}z^{-(i+1)}} \right)^{'}=\sum_{i=1}^{n+1}{i(i+1)b_{i}z^{-(i+2)}} \\ \Rightarrow B^{''}(1)=\sum_{i=1}^{n+1}{i(i+1)b_{i}}=2-2p_{-1}^{1}+\sum_{i=1}^{n}{i(i+3)p_{i}^{1}}.
\end{aligned}
\end{equation}
Following the same methodology as in Appendix~\ref{sec:app_proof_2}, we obtain that the average queue length given by (\ref{eq:avQn}).

\underline{\textbf{\emph{Condition for the stability of the queue:}}}
As in Appendix~\ref{sec:app_proof_2}, the queue is stable if $\lambda_{1}<\mu\Leftrightarrow \sum_{k=1}^{n}{kr_{k}^{1}}<\mu$, where $r_{k}^{1}=(1-q_{0})A_{k}+q_{0}B_{k}$ and $\mu=q_{0}A$.
The expressions for $A,A_{k},B_{k}$ are :

{\scriptsize
\begin{equation}
\label{eq:Akn}
\begin{aligned}
A_{k}=\sum_{i=k}^{n}{{n \choose i}{i \choose k} {q^{i}(1-q)^{n-i}}P_{0,i,0}^{k}\left(1-P_{d,i,0}\right)^{k}\left[1-P_{0,i,0}(1-P_{d,i,0})\right]^{i-k}},
\end{aligned}
\end{equation}
\begin{equation}
\label{eq:Bkn}
\begin{aligned}
B_{k}=\sum_{i=k}^{n}{{n \choose i}{i \choose k} {q^{i}(1-q)^{n-i}}P_{0,i,1}^{k}\left(1-P_{d,i,1}\right)^{k}\left[1-P_{0,i,1}(1-P_{d,i,1})\right]^{i-k}},
\end{aligned}
\end{equation}
}
\begin{equation}
\label{eq:An}
\begin{aligned}
A=\sum_{k=0}^{n}{{n \choose k} {q^{k}(1-q)^{n-k}}P_{0d,k}}.
\end{aligned}
\end{equation}
The values of $q_{0}$ for which the queue is stable are given by $q_{0min}<q_{0}<1$, where
\begin{equation}
\label{eq:q0minn}
q_{0min}=\frac{\displaystyle \sum_{k=1}^{n}{kA_{k}}}{\displaystyle A+\sum_{k=1}^{n}{kA_{k}}-\sum_{k=1}^{n}{kB_{k}}}.
\end{equation}

\section{Proof of Theorem~\ref{thm:delay}}\label{sec:app_delay}
We first present the analysis for the average delay $D_i$ required to deliver a packet from source $i$ to the destination.
This delay is the summation of the transmission delay from the source (to either the destination directly or the relay node), the queueing delay at the relay node, and the transmission delay from the relay to the destination.

When a packet is transmitted from the $i$-th source, there is a probability that this packet reaches the destination directly, which is $T_{D,i}$.
In the case that the transmission to the destination is not successful but is successful to the relay, the packet enters the relay queue, this is with probability $T_{R,i}$. The total time that the packet entering the relay queue reaches the destination is denoted by $D_Q$.
If the transmission from the source to the destination is unsuccessful to both destination and relay nodes, then it remains at
the source for future retransmission (with probability $1-T_{D,i}-T_{R,i}$).

The average delay $D_i$ is given by $D_i=T_{D,i}+T_{R,i}\left(1+D_{R}\right)+(1-T_{D,i}-T_{R,i})\left(1+D_i \right)$, which after some simplifications results in
\begin{equation} \label{eq:Di}
D_i=\frac{1+T_{R,i}D_{R}}{T_i}.
\end{equation}
The expressions for $T_{R,i}$, $T_{D,i}$, and $T_i$, are given in Section~\ref{sec:thr_analysis}.

When the packet from source $i$ that enters the queue waits, while other packets in the queue are transmitted, this waiting time
is the queue delay and is denoted by $D_Q$. When the packet that waits at the queue reaches the head of the queue, then it is transmitted from the relay with a probability
(due to the random access assumption), the transmission delay from the relay to the destination is $\frac{1}{\mu}$, where $\mu$ is the service rate.
The total delay in the relay node is denoted by $D_{R}$. The expression for $D_R$ is $D_{R}=D_{Q}+\mu+\left(1-\mu \right) \left(1+\frac{1}{\mu} \right)$, which is
\begin{align} \label{eq:DR}
D_{R}=D_{Q}+\frac{1}{\mu}.
\end{align}
From Little's law, we obtain that $D_{Q}=\frac{\overline{Q}}{\lambda}$, where $\overline{Q}$ is the average queue length for the relay and $\lambda$ is the average arrival rate. The expressions for $\overline{Q}$ and $\lambda$ are presented in Section~\ref{sec:analysis}.
After substituting (\ref{eq:DR}) into (\ref{eq:Di}), we obtain (\ref{eq:delay}) in Theorem~\ref{thm:delay}.
Note that in our study we do not take into account the processing and the propagation delay.

\bibliographystyle{ieeetr}
\bibliography{thesis}

\end{document}